\def\II{{\mathcal I}}
\def\Z{{\mathbb Z}}
\def\00{{\bf 0}}
\def\11{{\bf 1}}
\def\+{\oplus}
\theoremstyle{thmstyleone}%
\newtheorem{theorem}{Theorem}
\newtheorem{corollary}[theorem]{Corollary}
\theoremstyle{thmstyletwo}%
\newtheorem{remark}{Remark}%
\theoremstyle{thmstylethree}%
\newtheorem{definition}{Definition}%
\begin{document}

\title{Lattice attack on group ring NTRU: The case of the dihedral group}


\author[1]{\fnm{Vikas} \sur{Kumar}}\email{v\_kumar@ma.iitr.ac.in}

\author[2]{\fnm{Ali} \sur{Raya}}\email{ali\_r@cs.iitr.ac.in}

\author[2]{\fnm{Sugata} \sur{Gangopadhyay}}\email{sugata.gangopadhyay@cs.iitr.ac.in}

\author*[1]{\fnm{Aditi Kar} \sur{Gangopadhyay}}\email{aditi.gangopadhyay@ma.iitr.ac.in}

\affil*[1]{\orgdiv{Department of Mathematics}, \orgname{IITR}, \orgaddress{\street{Roorkee}, \city{Haridwar}, \postcode{247667}, \state{Uttarakhand}, \country{India}}}

\affil[2]{\orgdiv{Department of Computer Science and Engineering}, \orgname{IITR}, \orgaddress{\street{Roorkee}, \city{Haridwar}, \postcode{247667}, \state{Uttarakhand}, \country{India}}}


\abstract{Group ring NTRU (GR-NTRU) provides a general structure to design different variants of NTRU-like schemes by employing different groups. Although, most of the schemes in literature are built over cyclic groups, nonabelian groups can also be used. Coppersmith and Shamir in 1997 have suggested that noncommutativity may result in better security against some lattice attacks for some groups.
Lattice attacks on the public key of NTRU-like cryptosystems try to retrieve the private key by solving the shortest vector problem (SVP) or its approximation in a lattice of a certain dimension, assuming the knowledge of the public key only.
 This paper shows that dihedral groups do not  guarantee better security against this class of attacks. We prove that retrieving the private key is possible by solving the SVP in two lattices with half the dimension of the original lattice generated for GR-NTRU based on dihedral groups. The possibility of such an attack was mentioned by Yasuda et al. 
 (IACR/2015/1170). In contrast to their proposed approach, we explicitly provide the lattice reduction without any structure theorem from the 
 representation theory for finite groups. Furthermore, we 
 demonstrate the effectiveness of our technique with experimental results.}

\keywords{Post-Quantum Cryptography, Lattice, NTRU, Group ring NTRU, Dihedral group}



\maketitle

\section{Introduction}\label{sec1}

The first NTRU cryptosystem \citep{ntru-first-paper} was proposed early in 1996 as a public key scheme built over a quotient ring of polynomials. Being an efficient scheme with reasonable memory requirements, NTRU has attracted cryptanalysts and undergone extensive analysis for its security and performance. 
IEEE considered NTRU for standardization as an efficient scheme based on post-quantum mathematical problems (IEEE-1363.1) \citep{IEEE}.
Moreover, a few NTRU-like schemes have been submitted to the National Institute of Standards and Technology (NIST) competition and proceeded through the different rounds of evaluation \citep{NTRUEncrypt,pqNTRU,HRSS-round1,NTRUprime-round1,NTRUprime-round2,NTRU-round2,NTRU-round3,nist-third-report}. 


For most submissions of NTRU, either in the literature or NIST's competition, the underlying ring $\mathcal{R}$ is selected to be a commutative ring, for example, $\mathcal{R} = \mathbb{Z}_q[x]/(x^N-1) \text{ for prime $N$ or } \mathcal{R} = \mathbb{Z}_q[x]/(x^{2^n}+1)$  for a positive integer $n$. However, when Coppersmith and  Shamir established their lattice attack against NTRU \citep{Coppersmith-Shamir-paper}, they stated that considering a noncommutative group algebra could be another direction to provide better security against their attack.

Few variants of NTRU based on noncommutative rings have been introduced and studied. 
In 1997, Hoffstein and Silverman \citep{non-commutative-ntru-unpublished} proposed a noncommutative variant of NTRU based on the dihedral group, which was broken soon by Coppersmith \citep{ibm-report}. The same design of the noncommutative NTRU based on the dihedral group has been analyzed by Truman \citep{Marylandthesis}. In the same work Truman extends the idea of the noncommutative NTRU to other group rings showing that Coppersmith's attacks can only work for the group ring based on the dihedral or closely related group rings. 

Another attempt to build a noncommutative scheme analogous to NTRU but based on  Quaternion algebra was proposed by Malekian et al. \citep{QTRU}. According to the authors' claim, the proposed cryptosystem is multidimensional and more resistant to some lattice attacks due to the noncommutativity of the underlying algebraic structure.
In \citep{Sakurai}, Yasuda et al. describe group ring NTRU (GR-NTRU), which serves as a general structure to build different variants of NTRU-like schemes. The group ring $\Z G$ corresponding to a finite group $G$ is used to create an NTRU-like variant, where the group $G$ can be abelian or nonabelian. They generalize the attack by Gentry \citep{Gentry} against composite-degree NTRU to GR-NTRU using the concepts of group representation theory.
Furthermore, they discuss their attack against some groups in the context of GR-NTRU, including the dihedral group.
It is worth mentioning here that the schemes discussed in ~\citep{non-commutative-ntru-unpublished, Marylandthesis} are variants of NTRU using group ring based on the dihedral group. However, the designs of these schemes differ from the dihedral group based GR-NTRU, and the attack proposed by Coppersmith in \citep{ibm-report} can not be applied against GR-NTRU. 

\noindent \textbf{Our contribution:}
For GR-NTRU on a dihedral group of order $2N$, Yasuda et al.~\citep{Sakurai} provided an overview of a lattice reduction, and an estimate of the lattice reduction complexity. However, they do not provide any explicit mapping 
or concrete algorithm for their reduction. Our work in this paper: 
\begin{itemize}
    \item  explicitly shows the lattice reduction using simple matrix algebra; 
    \item explains how to map the problem of retrieving the private key from solving the SVP (or an approximation of it) in a $4N$-dimensional lattice into two smaller $2N$-dimensional lattices. Furthermore, the structures of the smaller lattices are provided; 
    \item provides a pull-back approach to retrieve two decryption keys; one is a short-enough (non ternary) key while the other is a ternary key; 
    \item supports the reduction's correctness through theoretical analysis and experimental results; 
    \item proves that the dihedral group does not provide additional security to GR-NTRU compared to the standard NTRU based on a cyclic group of order $N$. 
\end{itemize}

The remaining paper is structured as follows: Section \ref{group rings} provides preliminaries related to group rings and the matrix representation of group ring elements. Section \ref{NTRU} describes the general design of GR-NTRU and the lattice attack on the public key. Section \ref{sec-4} discusses GR-NTRU based on the dihedral group and provides our method for lattice reduction. Section \ref{results} presents experimental verification of our lattice reduction attack on this scheme. Finally, we conclude our work in Section \ref{conclusion}.

\section{Group rings}
\label{group rings}
Group rings can be defined for arbitrary groups. However, as far as the scope of this work is concerned, we will consider group rings of finite groups only.
For a ring $R$ and a finite group $G = \{g_i : i = 1,2,\dots,n\}$ of order $n$, with identity element $g_1 = e$, define the set of formal sums
    \begin{equation}
        RG = \Biggl\{{a} = \sum_{i=1}^{n}\alpha_ig_i : \alpha_i\in R\text{~~~for~~~} i = 1,2,\dots,n\Biggr\}.
    \end{equation} 
Suppose ${a} = \sum_{i=1 }^{n}\alpha_ig_i $ and ${b}= \sum_{i=1}^{n}\beta_ig_i $ in $RG$. By definition, ${a} = {b}$ if and only if $\alpha_i = \beta_i$ for all $i = 1,2,\dots,n$.
Sum of ${a}$ and ${b}$ is defined as:
\begin{equation}
\label{sum}
    {a} + {b} = \sum_{i=1}^{n}\alpha_ig_i + \sum_{i=1}^{n}\beta_ig_i = \sum_{i=1}^{n}(\alpha_i+\beta_i)g_i 
\end{equation}
Define the product of ${a}$ and ${b}$ as:
\begin{equation}
\label{prod}
   ab = \left(\sum_{i=1}^{n}\alpha_ig_i \right)\left(\sum_{i=1}^{n}\beta_ig_i\right) = \sum_{i=1}^{n}\gamma_ig_i 
\end{equation}
where
\begin{equation}\label{prod-coeff}
    \gamma_i = \sum_{g_hg_k = g_i}\alpha_h\beta_k.
\end{equation}
For each element $a = \sum_{i=1}^{n}\alpha_ig_i \in RG $, we associate a unique vector 
${\textbf{a}} = (\alpha_1,\alpha_2,\dots,\alpha_{n})$. We use $a$ and $\mathbf{a}$ 
interchangeably to refer to an element of group ring $RG$. In vector notation
$${\textbf{a}}+{\textbf{b}} = (\alpha_1+\beta_1,\alpha_2+\beta_2,\dots,\alpha_n+\beta_n),~~ {\textbf{a}}\star {\textbf{b}} = (\gamma_1,\gamma_2,\dots,\gamma_n)$$
where $\gamma_i$, for $i=1, 2, \ldots, n$, are given by \eqref{prod-coeff}, denote
coordinatewise addition and the convolutional product of two vectors ${\textbf{a}},{\textbf{b}}\in RG$, respectively.

\begin{definition}(\citep[Chapter 3]{MS})
    The set $RG$ together with the operations defined in \eqref{sum} and \eqref{prod} forms a ring. We say that $RG$ is the \textit{group ring} of $G$ over $R$.
\end{definition}

 Suppose $R$ is ring with unity $1_{R}$, then $\textbf{1}_{RG} = (1_R,0,0,\dots,0)$ is the unity of the group ring $RG$.
We define the scalar product of elements of $RG$ with elements $\delta\in R$ as follows:
\begin{equation}
    \delta {\textbf{a}} = \delta(\alpha_1,\alpha_2,\dots,\alpha_{n}) = (\delta\alpha_1,\delta\alpha_2,\dots,\delta\alpha_{n}).
\end{equation}
This makes $RG$ an $R$-module. Further, if $R$ is commutative then $RG$ is an $R$-algebra.
\begin{definition}
    Let the vector $\normalfont{\textbf{a}}=(\alpha_1,\alpha_2,\alpha_3,\ldots,\alpha_n)\in RG$, and $1\leq r\leq n-1$. Then, $\normalfont{\textbf{a}}^{(r)} = (\alpha_r,\alpha_{r+1},\ldots,\alpha_{r-1})$ denotes the rotation of $\normalfont{\textbf{a}}$ to left by $r$ positions and $\normalfont{\textbf{a}}^{(-r)} = (\alpha_{n-r+1},\alpha_{n-r+2},\ldots,\alpha_{n-r})$ denote the rotation of $\normalfont{\textbf{a}}$ to right by $r$ positions, and $\normalfont{{\textbf{a}}}^{(0)}={\textbf{a}}$. We may also write $\normalfont{\textbf{a}}^{(-r)}=\normalfont{\textbf{a}}^{(n-r)}$.
\end{definition}
\textbf{Matrix representation of group ring elements: }
In \citep{Hurley}, Hurley establishes an isomorphism between a group ring $RG$ and a certain subring of $n \times n$ matrices over $R$. For a group $G = \{g_1,g_2,\ldots,g_n\}$, define the matrix of $G$ as 
\begin{equation}\label{matrix_rep}
    M_{G} = \begin{pmatrix}
g_1^{-1}g_1 & g_1^{-1}g_2 &\ldots\ldots& g_1^{-1}g_n\\
g_2^{-1}g_1 & g_2^{-1}g_2 &\ldots\ldots& g_2^{-1}g_n\\
\vdots & \vdots & \ddots & \vdots\\
g_n^{-1}g_1 & g_n^{-1}g_2 &\ldots\ldots& g_n^{-1}g_n\\
\end{pmatrix}.
\end{equation}
We now construct the $RG$-matrix of an element ${\textbf{a}}=(\alpha_{g_1},\alpha_{g_2},\dots,\alpha_{g_n})\in RG$ as follows
\begin{equation}
    M_{RG}(\textbf{a}) = \begin{pmatrix}
\alpha_{g_1^{-1}g_1} & \alpha_{g_1^{-1}g_2} & \ldots\ldots & \alpha_{g_1^{-1}g_n}\\
\alpha_{g_2^{-1}g_1} & \alpha_{g_2^{-1}g_2} & \ldots\ldots & \alpha_{g_2^{-1}g_n}\\
\vdots & \vdots & \ddots & \vdots\\
\alpha_{g_n^{-1}g_1} & \alpha_{g_n^{-1}g_2} &\ldots\ldots& \alpha_{g_n^{-1}g_n}\\
\end{pmatrix}.
\end{equation}

The set $M_{RG} = \{M_{RG}(\textbf{a}) : \textbf{a} \in RG\}$ is the subring of the ring of $n\times n$ matrices over $R$, denoted by $M_n(R)$. We say a matrix $A\in M_n(R)$ is an $RG$-matrix if there is an $\textbf{a}\in RG$ such that $A = M_{RG}(\textbf{a})$. 
\begin{theorem}(\citep[Thereom 1]{Hurley})
    The mapping $\tau : RG \rightarrow M_{RG}\subset M_n(R)$ defined as $\tau(\normalfont \textbf{a}) = M_{RG}(\textbf{a})$ is a bijective ring homomorphism, i.e., $\tau(\normalfont\textbf{a}+\textbf{b}) = \tau(\textbf{a})+\tau(\textbf{b}) = M_{RG}(\textbf{a}) + M_{RG}(\textbf{b})$, and $\tau(\normalfont\textbf{a}\star\textbf{b}) = \tau(\textbf{a})\cdot\tau(\textbf{b}) = M_{RG}(\textbf{a})\cdot M_{RG}(\textbf{b})$, where $+,\cdot$ denote the usual matrix addition and multiplication, respectively. Furthermore, $\tau$ is a module $R$-homomorphism, i.e., $\tau(\delta \normalfont\textbf{a}) = \delta\tau(\textbf{a}) = \delta M_{RG}(\textbf{a})$, for $\delta\in R$.
\end{theorem}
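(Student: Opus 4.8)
The plan is to verify the three homomorphism properties together with bijectivity by working entrywise from the defining fact that the $(i,j)$ entry of $M_{RG}(\mathbf{a})$ is the coefficient $\alpha_{g_i^{-1}g_j}$ of $g_i^{-1}g_j$ in $\mathbf{a}$. First I would record well-definedness: since the group element $g_i^{-1}g_j$ is uniquely determined by the pair $(i,j)$, every entry is an unambiguous element of $R$, so $\tau$ really does land in $M_n(R)$. I would also note a small but decisive observation: because $g_1 = e$, the first row of $M_{RG}(\mathbf{a})$ is $(\alpha_{g_1},\alpha_{g_2},\ldots,\alpha_{g_n})$, i.e.\ it reproduces the vector $\mathbf{a}$ itself. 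This will hand us injectivity at the end for free.

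Additivity and the $R$-module property are entrywise and essentially immediate. The $(i,j)$ entry of $M_{RG}(\mathbf{a}+\mathbf{b})$ is $(\alpha+\beta)_{g_i^{-1}g_j} = \alpha_{g_i^{-1}g_j} + \beta_{g_i^{-1}g_j}$, which is exactly the sum of the corresponding entries of $M_{RG}(\mathbf{a})$ and $M_{RG}(\mathbf{b})$; and likewise $(\delta\mathbf{a})_{g_i^{-1}g_j} = \delta\,\alpha_{g_i^{-1}g_j}$, giving $\tau(\delta\mathbf{a}) = \delta\,\tau(\mathbf{a})$. No group structure beyond the indexing is needed here.

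The substantive step, and the one I expect to be the main obstacle, is multiplicativity $\tau(\mathbf{a}\star\mathbf{b}) = \tau(\mathbf{a})\cdot\tau(\mathbf{b})$. Here I would compute the $(i,j)$ entry of the matrix product $M_{RG}(\mathbf{a})\cdot M_{RG}(\mathbf{b})$, namely $\sum_{k=1}^{n}\alpha_{g_i^{-1}g_k}\,\beta_{g_k^{-1}g_j}$, and compare it with the $(i,j)$ entry of $M_{RG}(\mathbf{a}\star\mathbf{b})$, which by \eqref{prod-coeff} is the coefficient of $g_i^{-1}g_j$ in the convolution, i.e.\ $\sum_{g_h g_l = g_i^{-1}g_j}\alpha_{g_h}\beta_{g_l}$. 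The crux is a bijection argument: writing $g = g_i^{-1}g_j$, the assignment $g_k \mapsto (g_i^{-1}g_k,\; g_k^{-1}g_j)$ is a bijection from $G$ onto the set of ordered factorizations $(g_h,g_l)$ with $g_h g_l = g$, its inverse being $g_h \mapsto g_k = g_i g_h$. Under this correspondence the product $(g_i^{-1}g_k)(g_k^{-1}g_j)$ telescopes to $g_i^{-1}g_j = g$, so the two sums agree term by term. This is precisely the point where the full force of the group structure (associativity and cancellation), rather than mere reindexing, is used, and establishing this correspondence cleanly is the heart of the proof.

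Finally I would close bijectivity. Surjectivity onto $M_{RG}$ holds by the very definition $M_{RG} = \{M_{RG}(\mathbf{a}) : \mathbf{a}\in RG\}$, so nothing further is required. For injectivity, if $M_{RG}(\mathbf{a}) = M_{RG}(\mathbf{b})$ then comparing first rows, which as noted equal $\mathbf{a}$ and $\mathbf{b}$ respectively, forces $\mathbf{a} = \mathbf{b}$; equivalently, since $\tau$ is additive, $M_{RG}(\mathbf{a}) = 0$ gives $\alpha_{g_j} = 0$ for all $j$, so $\ker\tau = 0$. Combining these establishes that $\tau$ is a bijective ring and $R$-module homomorphism, as claimed.
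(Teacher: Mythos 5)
The paper does not prove this statement itself---it is imported verbatim from Hurley's paper---but your argument is correct and is essentially the standard proof given there: entrywise verification of additivity and the $R$-module property, the reindexing bijection $g_k \mapsto (g_i^{-1}g_k,\, g_k^{-1}g_j)$ between summation indices and ordered factorizations of $g_i^{-1}g_j$ for multiplicativity, and recovery of $\mathbf{a}$ as the first row (since $g_1 = e$) for injectivity. Nothing is missing.
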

\begin{theorem}(\citep[Thereom 2]{Hurley})
    Let $R$ be a ring with unity and $G$ be a finite group. Then, $\normalfont\textbf{a}\in RG$ is a unit if and only if $M_{RG}(\normalfont\textbf{a})$ is invertible in $M_n(R)$. In that case, inverse of $M_{RG}(\normalfont\textbf{a})$ is also an $RG$-matrix.
\end{theorem}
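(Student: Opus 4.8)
The plan is to leverage Theorem~1, which already tells us that $\tau\colon RG \to M_{RG}$ is a bijective ring homomorphism onto the subring $M_{RG}$ of $M_n(R)$, together with an intrinsic characterization of which matrices in $M_n(R)$ are $RG$-matrices. First I would record the elementary fact that $\tau(\mathbf{1}_{RG}) = I_n$: the entry of $M_{RG}(\mathbf{1}_{RG})$ in position $(i,j)$ is $\alpha_{g_i^{-1}g_j}$ with $\alpha_{g_1}=1_R$ and all other coordinates zero, so it equals $1_R$ exactly when $g_i^{-1}g_j = e$, i.e. when $i=j$. The forward implication is then immediate: if $\mathbf{a}$ is a unit with $\mathbf{a}\star\mathbf{b} = \mathbf{b}\star\mathbf{a} = \mathbf{1}_{RG}$, applying the homomorphism $\tau$ gives $M_{RG}(\mathbf{a})M_{RG}(\mathbf{b}) = M_{RG}(\mathbf{b})M_{RG}(\mathbf{a}) = I_n$, so $M_{RG}(\mathbf{a})$ is invertible in $M_n(R)$ and its inverse is the $RG$-matrix $M_{RG}(\mathbf{b})$.

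The substance of the theorem is the converse together with the claim that the inverse is again an $RG$-matrix, and here the key step is to characterize $RG$-matrices by a commutation property. For each $g\in G$ let $P_g \in M_n(R)$ be the $0/1$ permutation matrix of the left regular representation, defined by $(P_g)_{ik}=1_R$ iff $g_i = g\,g_k$. I would prove the claim: a matrix $A\in M_n(R)$ is an $RG$-matrix if and only if $P_g A = A P_g$ for every $g\in G$. The mechanism is that $A=M_{RG}(\mathbf{a})$ precisely when its $(i,j)$ entry depends only on $g_i^{-1}g_j$, while conjugation by $P_g$ gives $(P_g A P_g^{-1})_{ij} = A_{kl}$ where $g_k = g^{-1}g_i$ and $g_l = g^{-1}g_j$; since $g_k^{-1}g_l = g_i^{-1}g_j$, these conjugations realize exactly the diagonal left-translation action on index pairs, whose orbits are parametrized by the ``difference'' $g_i^{-1}g_j$. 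Thus invariance of $A$ under all such conjugations is equivalent to its entries being constant on these orbits, which is the $RG$-matrix condition.

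With the characterization in hand the converse is short. Suppose $A = M_{RG}(\mathbf{a})$ is invertible in $M_n(R)$ with inverse $B$. From $P_g A = A P_g$, multiplying on both sides by $B$ and using $AB=BA=I_n$ yields $P_g B = B P_g$ for every $g$, so $B$ is an $RG$-matrix, say $B = M_{RG}(\mathbf{b})$. Then $M_{RG}(\mathbf{a}\star\mathbf{b}) = AB = I_n = \tau(\mathbf{1}_{RG})$ and likewise $M_{RG}(\mathbf{b}\star\mathbf{a}) = I_n$; injectivity of $\tau$ forces $\mathbf{a}\star\mathbf{b} = \mathbf{b}\star\mathbf{a} = \mathbf{1}_{RG}$, so $\mathbf{a}$ is a unit and $M_{RG}(\mathbf{b})$ is the desired $RG$-matrix inverse.

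The main obstacle is the characterization lemma, and specifically the index bookkeeping $g_k^{-1}g_l = g_i^{-1}g_j$ together with the fact that every pair sharing a given difference arises from a suitable $g$ (take $g = g_i g_k^{-1}$). I would emphasize that no commutativity of $R$ is needed anywhere: the matrices $P_g$ have entries in $\{0_R, 1_R\}$, so all the products and the cancellation $AB=BA=I_n \Rightarrow P_g B = B P_g$ are valid over an arbitrary ring with unity. The only point requiring care is that the inverse $B$ is a priori merely some element of $M_n(R)$; the commutation characterization is exactly what pins it down inside the subring $M_{RG}$.
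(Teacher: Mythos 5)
The paper does not prove this statement at all: it is quoted verbatim from Hurley (Theorem 2 of the cited reference), so there is no in-paper argument to compare against line by line. Your proof is correct and self-contained. The forward direction and the identity $\tau(\mathbf{1}_{RG})=I_n$ are routine, and your key lemma --- that $A\in M_n(R)$ is an $RG$-matrix if and only if $A$ commutes with every permutation matrix $P_g$ of the left regular representation --- is sound: the orbit computation $g_k^{-1}g_l=g_i^{-1}g_j$ under $(i,j)\mapsto(g^{-1}g_i,g^{-1}g_j)$, together with the witness $g=g_ig_k^{-1}$ showing each level set of $(i,j)\mapsto g_i^{-1}g_j$ is a single orbit, is exactly what is needed, and the deduction $P_gA=AP_g\Rightarrow P_gB=BP_g$ for the two-sided inverse $B$ is valid over any ring with unity. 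This commutant characterization is a genuinely different (and somewhat heavier) route than the usual one: Hurley's original argument instead exploits the fact that for any row vector $\mathbf{v}$ one has $\mathbf{v}\,M_{RG}(\mathbf{a})=\mathbf{v}\star\mathbf{a}$, so the first row $\mathbf{v}$ of the inverse matrix $V$ satisfies $\mathbf{v}\star\mathbf{a}=\mathbf{1}_{RG}$, whence $\tau(\mathbf{v})M_{RG}(\mathbf{a})=I_n=VM_{RG}(\mathbf{a})$ and cancellation by $V$ gives $V=\tau(\mathbf{v})$ directly. That route is shorter; yours buys a reusable structural fact (the $RG$-matrices are precisely the commutant of the left regular representation), which identifies $M_{RG}$ with the image of the right regular representation and makes the closure of $M_{RG}$ under inversion conceptually transparent. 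Either way the conclusion and the level of generality (no commutativity of $R$) match the stated theorem.
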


\begin{corollary}(\citep[corollary 2]{Hurley})
When $R$ is commutative ring with unity, element $\normalfont\textbf{a}$ is a unit in $RG$ if and only if $det(\tau(\normalfont\textbf{a}))$ is a unit in $R$. In case, when $R$ is a field then $\normalfont\textbf{a}$ is a unit if and only if $det(\tau(\normalfont\textbf{a}))\neq 0$.
\end{corollary}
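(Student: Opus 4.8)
The plan is to derive this corollary directly from Theorem~2 together with the classical characterization of invertible matrices over a commutative ring. By Theorem~2, which states that $\mathbf{a}$ is a unit in $RG$ if and only if $\tau(\mathbf{a}) = M_{RG}(\mathbf{a})$ is invertible in $M_n(R)$, the entire statement reduces to the purely matrix-theoretic claim that, for a commutative ring $R$ with unity, a matrix $A \in M_n(R)$ is invertible if and only if $\det(A)$ is a unit in $R$. Once this is established, applying it with $A = \tau(\mathbf{a})$ yields the first assertion immediately.

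First I would prove the forward implication of the matrix claim. Suppose $A$ is invertible with inverse $B \in M_n(R)$, so that $AB = I_n$. Taking determinants and using multiplicativity of the determinant over a commutative ring gives $\det(A)\det(B) = \det(I_n) = 1_R$; since $R$ is commutative, $\det(B)$ is then a two-sided inverse of $\det(A)$, so $\det(A)$ is a unit in $R$.

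Next I would prove the converse via the adjugate. Assume $\det(A)$ is a unit in $R$ and set $u = \det(A)^{-1}$. The adjugate satisfies $A \cdot \operatorname{adj}(A) = \operatorname{adj}(A) \cdot A = \det(A)\, I_n$, an identity that holds over any commutative ring because it is a system of polynomial identities in the entries of $A$ with integer coefficients (the Laplace cofactor expansions). Multiplying through by $u$ gives $A \cdot (u\,\operatorname{adj}(A)) = (u\,\operatorname{adj}(A)) \cdot A = I_n$, exhibiting $u\,\operatorname{adj}(A) \in M_n(R)$ as a two-sided inverse of $A$. This completes the matrix claim, and hence the first statement of the corollary.

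Finally, for the field case I would simply specialize: when $R$ is a field, the units of $R$ are precisely its nonzero elements, so $\det(\tau(\mathbf{a}))$ is a unit if and only if $\det(\tau(\mathbf{a})) \neq 0$, which gives the last assertion. I do not anticipate a genuine obstacle, as the result is standard; the only point demanding care is that both the multiplicativity of the determinant and the adjugate identity must be invoked over a general commutative ring rather than over a field, which is justified by their nature as universal integer-coefficient polynomial identities.
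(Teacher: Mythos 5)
Your proof is correct and is precisely the standard derivation: the paper states this result as a cited corollary of Hurley's Theorem~2 without reproducing a proof, and your argument (reduce via Theorem~2 to the matrix statement, then use multiplicativity of the determinant for one direction and the adjugate identity over a commutative ring for the other, specializing to nonzero determinant when $R$ is a field) is exactly the intended deduction.
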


\section{Group ring NTRU (GR-NTRU)}
\label{NTRU}
Yasuda et al. \citep{Sakurai} proposed a general framework to develop NTRU-like cryptosystems based on group rings. They call this scheme: Group ring NTRU or GR-NTRU. The idea for such a construction is as follows:\\

\textbf{Parameters selection: }
Let $n, p, q, d$ be positive integers with $p$ prime, $p << q$, gcd$(p,q) = 1$, $2d+1\leq n$, and $q>(6d+1)p$. Throughout this paper, we will take $p=3, d$ is taken to be at most $\lfloor \frac{n}{3}\rfloor$, and $q$ is usually a power of 2. Let $\Z G, \Z_qG$, and $\Z_pG$ be group rings where $\Z, \Z_q$, and $\Z_p$ denote the ring of integers, ring of integers modulo $q$, and ring of integers modulo $p$, respectively, and $G$ is a finite group of order $n$.

Let $t_1,t_2$ be positive integers such that $t_1+t_2\leq n$. Define 
$$\mathcal{P}(t_1,t_2) =
   \left\{ {\textbf{a}} \in \Z G  \middle\vert \begin{array}{l}
    {\textbf{a}}~~\text{has $t_1$ coefficients equal to $1$} \\
    {\textbf{a}}~~\text{has $t_2$ coefficients equal to $-1$} \\
    \text{other coefficients are 0}
  \end{array}\right\}.$$
Elements in $\mathcal{P}(t_1,t_2)$ are referred to as ternary vectors or ternary elements. For ${\textbf{a}}\in \Z_q G$, the \textit{centered lift} of ${\textbf{a}}$ is the unique element ${\textbf{a}}_{lifted}\in \Z G$ whose coefficients are in the interval $\left(-\frac{q}{2},\frac{q}{2}\right]$ and ${\textbf{a}}_{lifted}\pmod q = {\textbf{a}}$, where ${\textbf{a}}_{lifted}\pmod q$ is obtained by reducing each coefficient of the vector ${\textbf{a}}_{lifted}$ modulo $q$. \\
A message is a vector in $\Z G$ that is the centered lift of some element in $\Z_p G$.  In other words,  message space consists of elements from $\Z G$ whose coefficients are between $-\frac{p}{2}$ and $\frac{p}{2}$.\\

\textbf{Key generation: }
\begin{enumerate}[label=(\roman*)]
    \item choose ${\textbf{f}}\in \mathcal{P}(d+1,d)$ such that there exist ${\textbf{f}}_q\in \Z_q G$, ${\textbf{f}}_p \in \Z_p G$ satisfying ${\textbf{f}}\star {\textbf{f}}_q \equiv {1}_{\Z_q G}\pmod q$ and ${\textbf{f}} \star {\textbf{f}}_p \equiv {1}_{\Z_p G} \pmod p$. 
    \item choose another element ${\textbf{g}}\in \mathcal{P}(d,d)$.
    \item construct ${\textbf{h}}\in\Z_q G$ such that ${\textbf{f}} \star {\textbf{h}} = {\textbf{g}} \pmod q$, equivalently ${\textbf{h}}={\textbf{f}}_q\star{\textbf{g}}\pmod q$.
    \item declare ${\textbf{h}},p,q$ to be public key.
    \item $({\textbf{f}},{\textbf{g}})$ and ${\textbf{f}}_p$ are kept private.\\
\end{enumerate}

\textbf{Encryption: }To encrypt a message ${\textbf{m}}$, we first randomly choose ${\textbf{r}}\in\mathcal{P}(d,d)$. Then, the ciphertext is computed as follows:
$
    {\textbf{c}} = p{\textbf{h}}\star {\textbf{r}} + {\textbf{m}} \pmod q.
$\\

\textbf{Decryption: }First, compute ${\textbf{a}} \equiv {\textbf{f}}\star {\textbf{c}} \pmod q$. Then, centerlift it to ${\textbf{a}}_{lifted}$ modulo $q$. Now, ${\textbf{m}}$ can be recovered by computing ${\textbf{f}}_p\star {\textbf{a}}_{lifted} \pmod p$ and centerlifting it modulo $p$.\\

\textbf{Correctness: }We have ${\textbf{a}} \equiv p{\textbf{g}}\star{\textbf{r}} + {\textbf{f}}\star {\textbf{m}}\pmod q$. Since ${\textbf{f}}\in\mathcal{P}(d+1,d)$,
${\textbf{g}}, {\textbf{r}}\in\mathcal{P}(d,d)$, and coefficients of ${\textbf{m}}$ lie between $-\frac{p}{2}$ to $\frac{p}{2}$. Therefore, the largest coefficient of ${\textbf{g}}\star {\textbf{r}}$ can be $2d$ and the largest coefficient of ${\textbf{f}}\star {\textbf{m}}$ can be $(2d+1)\frac{p}{2}$. Consequently, the largest coefficient of $p{\textbf{g}}\star {\textbf{r}} + {\textbf{f}}\star {\textbf{m}}$ is at most $(6d+1)\frac{p}{2}$. Thus, if $q>(6d+1)p$, computing ${\textbf{a}}\equiv {\textbf{f}}\star {\textbf{c}}\pmod q$ and then centerlifting it gives exactly the element $p{\textbf{g}}\star {\textbf{r}} + {\textbf{f}}\star {\textbf{m}}$ without modulo $q$. Now, we multiply this element with ${\textbf{f}}_p$ and reduce coefficients modulo $p$ to recover an element in $\Z_p G$ whose centered lift gives the message ${\textbf{m}}$.\\

\textbf{Lattice attack on GR-NTRU cryptosystem: }Let 
 ${\textbf{h}} =(h_1,h_2,\dots,h_{n})$ be a GR-NTRU public key generated by the private key ${\textbf{f}} =(f_1,f_2,\dots,f_{n})$ and ${\textbf{g}} =(g_1,g_2,\dots,g_{n})$, i.e., ${\textbf{f}}\star{\textbf{h}}={\textbf{g}}\pmod q$.
 
The NTRU lattice $L_{{\textbf{h}}}$ associated to ${\textbf{h}}$ is a $2n$-dimensional lattice generated by the rows of the matrix 
\begin{equation}
    M_{{\textbf{h}}} = \begin{pmatrix}
        I_n & H\\
        \00_n & qI_n
    \end{pmatrix}
\end{equation}
where, $H=\tau({\textbf{h}})$ is a matrix of order $n$ with vector ${\textbf{h}}$ as its first row, and $I_n$ is identity matrix of order $n$.
 \begin{theorem}
 \label{SVP}
The vector $({\normalfont\textbf{f}},{\normalfont\textbf{g}})$ lies in the lattice $L_{{\normalfont\textbf{h}}}$, and if $n$ is large, then there is a high probability that the shortest nonzero vectors in $L_{{\normalfont\textbf{h}}}$ are $(\normalfont{\textbf{f}},{\textbf{g}})$ and other vectors obtained by ``rotations" of $\normalfont{{\textbf{f}}}$ and $\normalfont{{\textbf{g}}}$.
\end{theorem}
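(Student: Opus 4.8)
The plan is to separate the statement into its exact algebraic part (that $(\textbf{f},\textbf{g})$ is a lattice point, together with its rotations) and its heuristic part (that these are shortest). For membership, I would first describe a generic element of $L_{\textbf{h}}$: since the lattice is generated by the rows of $M_{\textbf{h}}$, every point has the form $(\textbf{u},\textbf{v})M_{\textbf{h}} = (\textbf{u},\,\textbf{u}H+q\textbf{v})$ for integer row vectors $\textbf{u},\textbf{v}\in\Z^n$. The identity I need is that the matrix action $\textbf{u}H$ equals the convolution $\textbf{u}\star\textbf{h}$. Reading off the $g_j$-coordinate, with $g_1=e$ the $(i,j)$ entry of $H=\tau(\textbf{h})$ is $h_{g_i^{-1}g_j}$, so $(\textbf{u}H)_{g_j}=\sum_i u_{g_i}h_{g_i^{-1}g_j}$, which is exactly the product coefficient $\gamma_{g_j}$ of \eqref{prod-coeff}; equivalently, this is the first-row instance of Hurley's isomorphism $\tau(\textbf{u}\star\textbf{h})=\tau(\textbf{u})\cdot\tau(\textbf{h})$. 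Applying it with $\textbf{u}=\textbf{f}$ and using $\textbf{f}\star\textbf{h}\equiv\textbf{g}\pmod q$ gives $\textbf{f}H=\textbf{g}+q\textbf{k}$ for some $\textbf{k}\in\Z^n$, so the choice $\textbf{u}=\textbf{f}$, $\textbf{v}=-\textbf{k}$ yields $(\textbf{f},\textbf{g})\in L_{\textbf{h}}$.

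Next I would produce the rotated vectors. For any group element $g_i$, associativity of $\star$ gives $(g_i\star\textbf{f})\star\textbf{h}=g_i\star(\textbf{f}\star\textbf{h})\equiv g_i\star\textbf{g}\pmod q$, so the same membership argument shows $(g_i\star\textbf{f},\,g_i\star\textbf{g})\in L_{\textbf{h}}$. Since left multiplication by a single group element only permutes coordinates, each such vector is a rotation of $(\textbf{f},\textbf{g})$ with identical Euclidean norm, giving a family of $n$ equally short lattice vectors.

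Finally I would compare lengths. Because $\textbf{f}\in\mathcal P(d+1,d)$ and $\textbf{g}\in\mathcal P(d,d)$, we have $\|(\textbf{f},\textbf{g})\|=\sqrt{(2d+1)+2d}=\sqrt{4d+1}$, which for $d\le\lfloor n/3\rfloor$ is $O(\sqrt n)$. On the other hand $L_{\textbf{h}}$ has dimension $2n$ and, since $M_{\textbf{h}}$ is block triangular, covolume $\det M_{\textbf{h}}=\det(I_n)\det(qI_n)=q^n$. The Gaussian heuristic then predicts a generic shortest length of about $\sqrt{\tfrac{2n}{2\pi e}}\,(q^n)^{1/2n}=\sqrt{\tfrac{nq}{\pi e}}$. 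As $q>(6d+1)p$ is large relative to $d$, the value $\sqrt{4d+1}$ falls far below this estimate, so the rotation vectors are anomalously short and, for large $n$, are with high probability the shortest nonzero vectors of $L_{\textbf{h}}$.

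The main obstacle is exactly this last step: the membership and norm computations are exact, but ruling out some other, unexpectedly short vector cannot be done unconditionally — this is precisely what the presumed hardness of SVP leaves open. I would therefore present the length comparison as a Gaussian-heuristic estimate, as is standard in the NTRU literature, and make explicit that "high probability" refers to the heuristic taken over random keys rather than to a provable lower bound on $\lambda_1(L_{\textbf{h}})$.
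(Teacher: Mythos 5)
Your proposal is correct and takes essentially the same route as the paper, which gives no self-contained argument but defers to \citep[Propositions 6.59, 6.61]{HPS} and then carries out exactly your Gaussian-heuristic length comparison in Section~\ref{short_vector}; your membership step via $\mathbf{u}H=\mathbf{u}\star\mathbf{h}$ (the first row of Hurley's isomorphism) and your description of the ``rotations'' as left multiplication by group elements is precisely the general form of what Theorem~\ref{matrixofdihedral} spells out concretely for $D_N$. Your closing caveat that ``high probability'' is a heuristic over random keys rather than a provable lower bound on $\lambda_1(L_{\mathbf{h}})$ matches the paper's (implicit) treatment.
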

Therefore, the private key can be recovered by solving the SVP in a lattice of dimension $2n$. The proof of the Theorem \ref{SVP} can be given precisely the same way as in \citep[Proposition 6.59, 6.61]{HPS}.
By ``rotation" in the above theorem, we refer to a transformation related to the underlying group and not 
necessarily a cyclic rotation in the usual sense. We discuss the particular transformation and provide
the proof for the dihedral group based GR-NTRU in Section~\ref{short_vector}.
\begin{remark}\label{remark1}
    If one can find a pair of vectors $(\normalfont{\textbf{f}}^{\prime},{\textbf{g}}^{\prime})\in L_{{\textbf{h}}}$, not necessarily the private key, with small enough coefficients such that for an arbitrary message $\normalfont{\textbf{m}}$ encrypted using the public key $\normalfont{\textbf{h}}$ and a random ternary vector $\normalfont{\textbf{r}}$, the largest coefficient of $\normalfont p{\textbf{g}}^{\prime}\star{\textbf{r}}+{\textbf{f}}^{\prime}\star{\textbf{m}}$ is at most $\frac{q}{2}$. Then, $(\normalfont{\textbf{f}}^{\prime},{\textbf{g}}^{\prime})$ serves the purpose of decryption for $\normalfont{\textbf{m}}$. However, this requires $\normalfont{\textbf{f}}^{\prime}$ to be invertible over $\Z_p G$. 
\end{remark}

\textbf{NTRU as a special case of GR-NTRU: }
It is straightforward to observe that the NTRU scheme \citep[Chapter 6.10]{HPS} can be reformulated over the group ring:
\begin{equation}
  \Z_q C_N \cong \Z_q[x]/(x^N-1)
\end{equation}
where $C_N$ is a cyclic group of order $N$. In this case, $H$ is a circulant matrix of order $N$. Further, the lattice attack for GR-NTRU is the generalization of lattice attack on NTRU given in \citep{ntru-first-paper}. Also, Gentry in \citep{Gentry} proposed an attack on NTRU with composite value of $N$. Therefore, the  value of $N$ is always taken to be prime.

\section{GR-NTRU based on the dihedral group}\label{sec-4}
The dihedral group $D_N$ of order $2N$ is given by $D_N = \bigl<x , y : x^N = y^2 = 1, xy = yx^{-1}\bigr>$, i.e., $D_N = \{1,x,\dots,x^{N-1},y,yx,\dots,yx^{N-1}\}$.

This article will focus on the GR-NTRU built over group ring $\Z_q D_N$:
 
 \begin{equation}
     \Z_q D_N \cong \frac{\Z_q[x,y]}{\big<x^N-1,y^2-1,yx - x^{N-1}y\big>} 
 \end{equation}
According to Theorem \ref{SVP}, finding the private key in GR-NTRU cryptosystem over $\Z_q D_N$ is equivalent to solve the SVP in a $4N$-dimensional lattice. This section highlights our reduction from solving SVP in a $4N$-dimensional lattice to two $2N$-dimensional lattices.
\begin{theorem}\label{matrixofdihedral}
    Consider an element $h\in\Z D_N$ where $$h = h_{00}1+h_{01}x+\dots+h_{0N-1}x^{N-1}+h_{10}y+h_{11}yx+\dots+h_{1N-1}yx^{N-1}.$$  Let $\normalfont{{\textbf{h}}}_0 =  (h_{00},h_{01},\ldots,h_{0N-1})$ and $\normalfont{\textbf{h}}_1 =  (h_{10},h_{11},\ldots,h_{1N-1})$, so that
$\normalfont{\textbf{h}} = ({\textbf{h}_0},{\textbf{h}_1})$
is a $2N$-length vector. Then,
\begin{enumerate}
    \item The matrix of the group $D_N$, and consequently the $\Z D_N$-matrix of $\normalfont{\textbf{h}}$ is a $2N\times 2N$ matrix of the form $H=\tau(\normalfont{\textbf{h}})= \begin{pmatrix}
    H_0 & H_1 \\
    H_1 & H_0
\end{pmatrix}$, where $H_0$ is a circulant matrix with vector $\normalfont{\textbf{h}_0}$, and $H_1$ is a reverse circulant matrix with vector $\normalfont{\textbf{h}_1}$, as their first rows, respectively.

\item Let $L_{\normalfont{\textbf{h}}}$ denotes the lattice associated to $\normalfont{\textbf{h}}$ spanned by the rows of the matrix 

\[
       M_{\normalfont{\textbf{h}}} = \left[\begin{array}{c|c}
        I_{2N} & H \\
        \hline
         \00_{2N} &  qI_{2N} \\
    \end{array}\right]
\] 

   and $\normalfont{{\textbf{f}}}=({\textbf{f}}_0,{\textbf{f}}_1),\normalfont{{\textbf{g}}}=({\textbf{g}}_0,{\textbf{g}}_1)$ are vectors such that $\normalfont{({\textbf{f}},{\textbf{g}})}\in L_{{\normalfont\textbf{h}}}$. Then, for every $-N+1\leq r\leq N-1$, $(\normalfont{{\textbf{f}}_0^{(r)},{\textbf{f}}_1^{(-r)},{\textbf{g}}_0^{(r)},{\textbf{g}}_1^{(-r)}}), (\normalfont{{\textbf{f}}_1^{(r)},{\textbf{f}}_0^{(-r)},{\textbf{g}}_1^{(r)},{\textbf{g}}_0^{(-r)}})\in L_{\normalfont{{\textbf{h}}}}$
\end{enumerate} 
\end{theorem}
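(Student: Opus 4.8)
The plan is to treat the two parts separately, both by direct computation from the definitions and with no appeal to representation theory.

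For the first part I would write down the group matrix $M_{D_N}$ using the entries $g_i^{-1}g_j$ of \eqref{matrix_rep}, ordering the group as $1,x,\dots,x^{N-1},y,yx,\dots,yx^{N-1}$, and then read off $\tau(\mathbf{h})$ by replacing each $g_i^{-1}g_j$ with the coefficient $h_{g_i^{-1}g_j}$. The only inputs are the dihedral relations $(x^a)^{-1}=x^{-a}$, $(yx^a)^{-1}=yx^a$, and $x^a y = y x^{-a}$. Splitting into the four $N\times N$ blocks according to whether each index is a rotation $x^a$ or a reflection $yx^a$, a short case check gives: the rotation--rotation block has $(a,b)$ entry $h_{x^{b-a}}=h_{0,(b-a)\bmod N}$, which is circulant with first row $\mathbf{h}_0$; the mixed blocks have entry $h_{yx^{a+b}}=h_{1,(a+b)\bmod N}$, which is reverse circulant with first row $\mathbf{h}_1$; and the reflection--reflection block reduces again to $h_{0,(b-a)\bmod N}$. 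This is exactly $H=\begin{pmatrix}H_0&H_1\\H_1&H_0\end{pmatrix}$.

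For the second part I would first characterize membership in $L_{\mathbf{h}}$. A generic lattice vector is $(\mathbf{a},\mathbf{b})M_{\mathbf{h}}=(\mathbf{a},\mathbf{a}H+q\mathbf{b})$, so $(\mathbf{f},\mathbf{g})\in L_{\mathbf{h}}$ precisely when $\mathbf{f}\in\Z^{2N}$ and $\mathbf{g}\equiv\mathbf{f}H\pmod q$. Since the first row of $\tau(\mathbf{c})$ is $\mathbf{c}$ and $\tau$ is multiplicative, $\mathbf{f}H=\mathbf{f}\star\mathbf{h}$, so the condition reads $\mathbf{g}\equiv\mathbf{f}\star\mathbf{h}\pmod q$. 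The key step is then to observe that left multiplication by any group element $z\in D_N$ preserves $L_{\mathbf{h}}$: if $\mathbf{g}=\mathbf{f}\star\mathbf{h}+q\mathbf{b}$ with $\mathbf{b}\in\Z^{2N}$, then by associativity $z\star\mathbf{g}=(z\star\mathbf{f})\star\mathbf{h}+q(z\star\mathbf{b})$, and both $z\star\mathbf{f}$ and $z\star\mathbf{b}$ are again integer vectors because $z\star(\cdot)$ merely permutes coordinates via $g\mapsto z^{-1}g$. Hence $(z\star\mathbf{f},z\star\mathbf{g})\in L_{\mathbf{h}}$ for every $z$. It remains to match the two claimed families with two types of $z$: using $x^a y=yx^{-a}$, a direct expansion of $x^{-r}\star\mathbf{f}$ shows the rotation part $\mathbf{f}_0$ rotates one way while the reflection part $\mathbf{f}_1$ rotates the opposite way, giving $x^{-r}\star\mathbf{f}=(\mathbf{f}_0^{(r)},\mathbf{f}_1^{(-r)})$; applying the same computation to $z=x^{-r}y$ together with $y\star\mathbf{f}=(\mathbf{f}_1,\mathbf{f}_0)$ yields $(\mathbf{f}_1^{(r)},\mathbf{f}_0^{(-r)})$, and identically for the $\mathbf{g}$-halves. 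Thus the two families are exactly the images under left multiplication by $x^{-r}$ and by $x^{-r}y$, and the preceding observation places them in $L_{\mathbf{h}}$ for all $-N+1\le r\le N-1$.

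I expect the main obstacle to be the index bookkeeping in this last step: because the reflection part lives in the reverse-circulant block, the relation $x^a y=yx^{-a}$ forces the two halves of $\mathbf{f}$ (and of $\mathbf{g}$) to rotate in \emph{opposite} directions, which is the structural reason the theorem pairs $(r)$ with $(-r)$ rather than rotating both halves the same way as in the cyclic NTRU case. Keeping these directions and the reductions modulo $N$ consistent between Part 1 and Part 2 is the delicate point; everything else is associativity and a permutation-of-coordinates argument.
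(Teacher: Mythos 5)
Your proof is correct, and Part 2 takes a genuinely different (though closely parallel) route from the paper's. The paper argues entirely at the matrix level: it first notes the swap symmetry $(\mathbf{f}_0,\mathbf{f}_1,\mathbf{u}_0,\mathbf{u}_1)M_{\mathbf{h}}=(\mathbf{f}_0,\mathbf{f}_1,\mathbf{g}_0,\mathbf{g}_1)$ implies $(\mathbf{f}_1,\mathbf{f}_0,\mathbf{u}_1,\mathbf{u}_0)M_{\mathbf{h}}=(\mathbf{f}_1,\mathbf{f}_0,\mathbf{g}_1,\mathbf{g}_0)$, and then invokes the block identities $\mathbf{a}^{(r)}H_0=(\mathbf{a}H_0)^{(r)}$ and $\mathbf{a}^{(r)}H_1=(\mathbf{a}H_1)^{(-r)}$ for circulant and reverse-circulant matrices (and likewise for $\mathbf{a}^{(-r)}$). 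Your key lemma instead lives at the group-ring level: left multiplication by any $z\in D_N$ permutes coordinates and, by associativity, maps $\mathbf{g}=\mathbf{f}\star\mathbf{h}+q\mathbf{b}$ to $z\star\mathbf{g}=(z\star\mathbf{f})\star\mathbf{h}+q(z\star\mathbf{b})$, so the whole left $D_N$-orbit of $(\mathbf{f},\mathbf{g})$ lies in $L_{\mathbf{h}}$; the two claimed families are then identified as the orbits of $x^{-r}$ and $x^{-r}y$. The two arguments encode the same facts (your $y$-action is the paper's swap, your $x^{-r}$-action is the paper's pair of rotation identities), but yours has the advantage of explaining \emph{why} the rotations appear --- they are exactly the left regular action of $D_N$ --- and of generalizing verbatim to GR-NTRU over any finite group, which is the point the paper only gestures at after Theorem~\ref{SVP} when it says ``rotation'' means a group-dependent transformation. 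The paper's version has the advantage of being checkable by pure matrix algebra without re-deriving how $z\star(\cdot)$ acts on coordinates; your index bookkeeping for $x^{-r}\star\mathbf{f}=(\mathbf{f}_0^{(r)},\mathbf{f}_1^{(-r)})$ and $y\star\mathbf{f}=(\mathbf{f}_1,\mathbf{f}_0)$ is exactly the computation hidden inside those circulant/reverse-circulant identities, and you have it right. Part 1 is handled the same way in both proofs (the paper simply says it follows from the definition of the group matrix; your four-block case check is the omitted computation, correctly carried out).
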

\begin{proof}
The proof of the first part can be derived directly from \eqref{matrix_rep}.
Now suppose that $(\normalfont{{\textbf{f}}_0,{\textbf{f}}_1,{\textbf{g}}_0,{\textbf{g}}_1})\in  L_{\normalfont{{\textbf{h}}}}$ then there exists a vector $(\normalfont{{\textbf{u}}_0,{\textbf{u}}_1})$ with integer entries such that $(\normalfont{{\textbf{f}}_0,{\textbf{f}}_1,{\textbf{u}}_0,{\textbf{u}}_1})M_{{\textbf{h}}}=(\normalfont{{\textbf{f}}_0,{\textbf{f}}_1,{\textbf{g}}_0,{\textbf{g}}_1}).$ It is easy to check that $(\normalfont{{\textbf{f}}_1,{\textbf{f}}_0,{\textbf{u}}_1,{\textbf{u}}_0})M_{{\textbf{h}}}=(\normalfont{{\textbf{f}}_1,{\textbf{f}}_0,{\textbf{g}}_1,{\textbf{g}}_0})$. Then the second part follows from the observations that, since $H_0$ is a circulant matix and $H_1$ is a reverse circulant matrix, therefore for every $0\leq r\leq N-1$ and a vector $\normalfont{\textbf{a}}$ of length $N$ with integer entries, $\normalfont{\textbf{a}}^{(r)}H_0 = ({\textbf{a}}H_0)^{(r)}$ and $\normalfont{\textbf{a}}^{(r)}H_1 = ({\textbf{a}}H_1)^{(-r)}$. Also, $\normalfont{\textbf{a}}^{(-r)}H_0 = ({\textbf{a}}H_0)^{(-r)}$ and $\normalfont{\textbf{a}}^{(-r)}H_1 = ({\textbf{a}}H_1)^{(r)}$.
\end{proof}
\subsection{Estimation of lengths of short vectors in the lattice $L_{{\textbf{h}}}$}\label{short_vector}
Let ${\textbf{f}} = ({\textbf{f}}_0,{\textbf{f}}_1)\in\mathcal{P}(d+1,d)$, ${\textbf{g}} = ({\textbf{g}}_0,{\textbf{g}}_1)\in \mathcal{P}(d,d)$ be two randomly and uniformly generated ternary vectors with $d$ at most $\lfloor\frac{2N}{3}\rfloor$. 
Therefore, for $-N+1\leq r\leq N-1$, the length of the vectors $(\normalfont{{\textbf{f}}_0^{(r)},{\textbf{f}}_1^{(-r)},{\textbf{g}}_0^{(r)},{\textbf{g}}_1^{(-r)}})$ and $ (\normalfont{{\textbf{f}}_1^{(r)},{\textbf{f}}_0^{(-r)},{\textbf{g}}_1^{(r)},{\textbf{g}}_0^{(-r)}})$ is  at most $\sqrt{\frac{8N}{3}+1}\approx 1.63\sqrt{N}.$ 
Let ${\textbf{h}}$ be a public key constructed from private vectors ${\textbf{f}}$ and ${\textbf{g}}$, i.e., ${\textbf{h}}={\textbf{f}}_q\star{\textbf{g}}\pmod q$.  According to Gaussian heuristic estimation, the length of the shortest vector in the lattice $L_{{\textbf{h}}}$ is $$\sigma(L_{{\textbf{h}}}) = \sqrt{\frac{4N}{2\pi e}}(\det L_{{\textbf{h}}})^{\frac{1}{4N}} = \sqrt{\frac{2qN}{\pi e}}\approx \sqrt{\frac{8}{\pi e}}N\approx 0.97N,$$ since $q\approx 4N$. Furthermore, 
$$\frac{\norm{(\normalfont{{\textbf{f}}_0^{(r)},{\textbf{f}}_1^{(-r)},{\textbf{g}}_0^{(r)},{\textbf{g}}_1^{(-r)}})}}{\sigma(L_{{\textbf{h}}})} =  \frac{\norm{(\normalfont{{\textbf{f}}_1^{(r)},{\textbf{f}}_0^{(-r)},{\textbf{g}}_1^{(r)},{\textbf{g}}_0^{(-r)}})}}{\sigma(L_{{\textbf{h}}})} \approx \frac{1.68}{\sqrt{N}}.$$ So, these vectors are a factor of $O\left(\frac{1}{\sqrt{N}}\right)$ shorter than predicted by the
Gaussian heuristic. Hence, for larger values of $N$, there is a high probability that the vectors $(\normalfont{{\textbf{f}}_0^{(r)},{\textbf{f}}_1^{(-r)},{\textbf{g}}_0^{(r)},{\textbf{g}}_1^{(-r)}}), (\normalfont{{\textbf{f}}_1^{(r)},{\textbf{f}}_0^{(-r)},{\textbf{g}}_1^{(r)},{\textbf{g}}_0^{(-r)}})$ are the shortest vectors in the lattice $L_{{\textbf{h}}}$.
\subsection{Our lattice reduction for GR-NTRU over $\Z_q D_N$}\label{our-reduction}
Let $
\mathcal{I} = \begin{pmatrix}
    I_N & ~~~I_N \\
I_N & -I_N
\end{pmatrix}$ where $I_N$ is an $N\times N$ identity matrix. For a ring $R$ (in our case $R=\Z$) with characteristic not $2$, $\mathcal{I}$ is invertible over $R$ or over the field of quotients of $R$.  Conjugating $\tau({\textbf{h}})$ by $\mathcal{I}$, we get 
\begin{equation}\label{conjugate}
    \mathcal{I}\begin{pmatrix}
    H_0 & H_1 \\
H_1 & H_0
\end{pmatrix}\mathcal{I}^{-1} = \begin{pmatrix}
    H_0 + H_1 & \00_N \\
\00_N & H_0 - H_1
\end{pmatrix}.
\end{equation}
Here, the matrix $\mathcal{I}$ is independent of $H_0$ and $H_1$.
\begin{theorem}
    Let the vector $(\normalfont{\textbf{f}},{\textbf{g}})$ where $\normalfont{\textbf{f}} = ({\textbf{f}}_0,{\textbf{f}}_1)$ and $\normalfont{\textbf{g}} = ({\textbf{g}}_0,{\textbf{g}}_1)$ lies in the lattice $L_{\normalfont{\textbf{h}}}$, i.e., $ \normalfont{\textbf{f}}\star {\textbf{h}} + q{\textbf{u}} = {\textbf{g}}$ for some vector $\normalfont{\textbf{u}}$. Let $-N+1\leq r\leq N-1$, then the vector $(\normalfont{\textbf{f}}_0^{(r)}+{\textbf{f}}_1^{(-r)},{\textbf{g}}_0^{(r)}+{\textbf{g}}_1^{(-r)})\in L_{{\textbf{h}}_0+{\textbf{h}}_1}$, and the vectors $(\normalfont{\textbf{f}}_0^{(r)}-{\textbf{f}}_1^{(-r)},{\textbf{g}}_0^{(r)}-{\textbf{g}}_1^{(-r)}),(\normalfont{\textbf{f}}_1^{(r)}-{\textbf{f}}_0^{(-r)},{\textbf{g}}_1^{(r)}-{\textbf{g}}_0^{(-r)})\in L_{{\textbf{h}}_0-{\textbf{h}}_1}$.
\end{theorem}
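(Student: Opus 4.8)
The plan is to translate the single $4N$-dimensional membership relation into its two block rows and then push each claimed vector through the rotation identities recorded in Theorem~\ref{matrixofdihedral}. Writing $\mathbf{u} = (\mathbf{u}_0, \mathbf{u}_1)$ and using the block form $H = \begin{pmatrix} H_0 & H_1 \\ H_1 & H_0 \end{pmatrix}$, the hypothesis $(\mathbf{f}_0, \mathbf{f}_1)H + q(\mathbf{u}_0, \mathbf{u}_1) = (\mathbf{g}_0, \mathbf{g}_1)$ splits into the two length-$N$ equations $\mathbf{f}_0 H_0 + \mathbf{f}_1 H_1 + q\mathbf{u}_0 = \mathbf{g}_0$ and $\mathbf{f}_0 H_1 + \mathbf{f}_1 H_0 + q\mathbf{u}_1 = \mathbf{g}_1$. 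These two identities, together with the fact that left-rotation is linear on length-$N$ vectors, are the only inputs I will need.

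For the first claim I would compute the product $(\mathbf{f}_0^{(r)} + \mathbf{f}_1^{(-r)})(H_0 + H_1)$ directly. Expanding it into four terms and applying the rotation identities from Theorem~\ref{matrixofdihedral}---namely $\mathbf{a}^{(r)}H_0 = (\mathbf{a}H_0)^{(r)}$, $\mathbf{a}^{(r)}H_1 = (\mathbf{a}H_1)^{(-r)}$, and their right-rotation analogues---turns every summand into a rotation of some $\mathbf{f}_i H_j$. The key observation is that the four summands regroup by rotation direction into $(\mathbf{f}_0 H_0 + \mathbf{f}_1 H_1)^{(r)} + (\mathbf{f}_0 H_1 + \mathbf{f}_1 H_0)^{(-r)}$. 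Substituting the two block equations then yields $(\mathbf{g}_0^{(r)} + \mathbf{g}_1^{(-r)}) - q(\mathbf{u}_0^{(r)} + \mathbf{u}_1^{(-r)})$, so that $(\mathbf{f}_0^{(r)} + \mathbf{f}_1^{(-r)})(H_0 + H_1) + q(\mathbf{u}_0^{(r)} + \mathbf{u}_1^{(-r)}) = \mathbf{g}_0^{(r)} + \mathbf{g}_1^{(-r)}$. Since $\mathbf{u}_0^{(r)} + \mathbf{u}_1^{(-r)}$ is an integer vector, this is exactly the statement that $(\mathbf{f}_0^{(r)} + \mathbf{f}_1^{(-r)}, \mathbf{g}_0^{(r)} + \mathbf{g}_1^{(-r)})$ lies in $L_{\mathbf{h}_0 + \mathbf{h}_1}$, the lattice generated by $\begin{pmatrix} I_N & H_0 + H_1 \\ \mathbf{0}_N & qI_N \end{pmatrix}$.

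The two remaining memberships follow the same recipe, with the sign change mirroring the block-diagonalization in equation~\eqref{conjugate}. For $(\mathbf{f}_0^{(r)} - \mathbf{f}_1^{(-r)})(H_0 - H_1)$ the cross terms regroup as $(\mathbf{f}_0 H_0 + \mathbf{f}_1 H_1)^{(r)} - (\mathbf{f}_0 H_1 + \mathbf{f}_1 H_0)^{(-r)}$, placing the second vector in $L_{\mathbf{h}_0 - \mathbf{h}_1}$ with witness $\mathbf{u}_0^{(r)} - \mathbf{u}_1^{(-r)}$. For the third vector I would avoid recomputing from scratch and instead invoke the swap symmetry already established in the proof of Theorem~\ref{matrixofdihedral}: since $(\mathbf{f}_1, \mathbf{f}_0, \mathbf{u}_1, \mathbf{u}_0)M_{\mathbf{h}} = (\mathbf{f}_1, \mathbf{f}_0, \mathbf{g}_1, \mathbf{g}_0)$, the pair $(\mathbf{f}_1, \mathbf{f}_0, \mathbf{g}_1, \mathbf{g}_0)$ also lies in $L_{\mathbf{h}}$, and applying the second-vector argument to it produces $(\mathbf{f}_1^{(r)} - \mathbf{f}_0^{(-r)}, \mathbf{g}_1^{(r)} - \mathbf{g}_0^{(-r)}) \in L_{\mathbf{h}_0 - \mathbf{h}_1}$. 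The only real subtlety throughout is sign bookkeeping: the reverse-circulant identity $\mathbf{a}^{(r)}H_1 = (\mathbf{a}H_1)^{(-r)}$ flips the rotation direction, and it is precisely this flip that aligns the $\mathbf{f}_1^{(-r)}$ terms with the $\mathbf{f}_0^{(r)}$ terms so that everything collapses onto a single rotation of each block equation. Keeping these signs consistent is where care is needed rather than where any genuine difficulty lies.
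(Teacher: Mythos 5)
Your proof is correct and follows essentially the same route as the paper: both arguments rest on splitting the membership relation along the block structure $H=\begin{pmatrix} H_0 & H_1\\ H_1 & H_0\end{pmatrix}$ and on the circulant/reverse-circulant rotation identities and swap symmetry from Theorem~\ref{matrixofdihedral}. The only difference is presentational --- the paper derives the $r=0$ case by conjugating the $\tau$-images by $\mathcal{I}$ and then obtains general $r$ by citing the rotation-closure of $L_{\mathbf{h}}$, whereas you add and subtract the two block-row equations directly and carry the rotations through the expansion explicitly; the underlying computation is identical.
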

\begin{proof}
We have $\normalfont{\textbf{f}}\star {\textbf{h}} + q{\textbf{u}} = {\textbf{g}}$. Applying $\tau$ to both sides and using the fact that $\tau$ is a ring homomorphism as well as a $\Z$-module homomorphism, and using Equation \eqref{conjugate}, we get 

\begin{align*}
    \tau({\textbf{f}})\cdot \tau({\textbf{h}}) + q \tau({\textbf{u}})&= \tau({\textbf{g}})\\
     \begin{pmatrix}
         F_0 & F_1\\
         F_1 & F_0
     \end{pmatrix}   \begin{pmatrix}
         H_0 & H_1\\
         H_1 & H_0
     \end{pmatrix} + q \begin{pmatrix}
         U_0 & U_1\\
         U_1 & U_0
     \end{pmatrix} &= \begin{pmatrix}
         G_0 & G_1\\
         G_1 & G_0
     \end{pmatrix}\\
\II\begin{pmatrix}
         F_0 & F_1\\
         F_1 & F_0
     \end{pmatrix}\II^{-1}\II \begin{pmatrix}
         H_0 & H_1\\
         H_1 & H_0
     \end{pmatrix}\II^{-1} + q\II  \begin{pmatrix}
         U_0 & U_1\\
         U_1 & U_0
     \end{pmatrix}\II^{-1} &=  \II \begin{pmatrix}
         G_0 & G_1\\
         G_1 & G_0
     \end{pmatrix} \II^{-1}\\
\begin{pmatrix}
         F_0 + F_1 & \00_N\\
         \00_N & F_0 - F_1
     \end{pmatrix}\begin{pmatrix}
         H_0 + H_1 & \00_N\\
         \00_N & H_0 - H_1
     \end{pmatrix} + q&\begin{pmatrix}
         U_0 + U_1 & \00_N\\
         \00_N & U_0 - U_1
     \end{pmatrix} \\= &\begin{pmatrix}
         G_0 + G_1 & \00_N\\
         \00_N & G_0 - G_1
     \end{pmatrix}.
\end{align*}
Equivalently,
\begin{align*}
    (F_0+ F_1)(H_0+ H_1) +q(U_0+ U_1) &= (G_0+ G_1)\\
    (F_0- F_1)(H_0- H_1) +q(U_0- U_1) &= (G_0- G_1).
\end{align*}
Considering the first rows, we have
\begin{align*}
({\textbf{f}}_0+{\textbf{f}}_1)(H_0+ H_1) +q({\textbf{u}}_0+{\textbf{u}}_1) &= ({\textbf{g}}_0+{\textbf{g}}_1)\\
({\textbf{f}}_0-{\textbf{f}}_1)(H_0- H_1) +q({\textbf{u}}_0-{\textbf{u}}_1) &= ({\textbf{g}}_0-{\textbf{g}}_1).
\end{align*}
Therefore,
\begin{align*}
({\textbf{f}}_0+{\textbf{f}}_1,{\textbf{u}}_0+{\textbf{u}}_1)\begin{pmatrix}
    I_{2N} & H_0 + H_1\\
    \00_N & qI_{2N}
\end{pmatrix} &= ({\textbf{f}}_0+{\textbf{f}}_1,{\textbf{g}}_0+{\textbf{g}}_1)\\
({\textbf{f}}_0-{\textbf{f}}_1,{\textbf{u}}_0-{\textbf{u}}_1)\begin{pmatrix}
    I_{2N} & H_0 - H_1\\
    \00_N & qI_{2N}
\end{pmatrix} &= ({\textbf{f}}_0-{\textbf{f}}_1,{\textbf{g}}_0-{\textbf{g}}_1).
\end{align*}
\noindent This implies that the vector $({\textbf{f}}_0+{\textbf{f}}_1,{\textbf{g}}_0+{\textbf{g}}_1)$ lies in the lattice $L_{{\textbf{h}}_0+{\textbf{h}}_1}$, and the vector $({\textbf{f}}_0-{\textbf{f}}_1,{\textbf{g}}_0-{\textbf{g}}_1)$ lies in the lattice $L_{{\textbf{h}}_0-{\textbf{h}}_1}.$ The rest follows from Theorem \ref{matrixofdihedral} that for $-N+1\leq r\leq N-1,$ the vectors $(\normalfont{{\textbf{f}}_0^{(r)},{\textbf{f}}_1^{(-r)},{\textbf{g}}_0^{(r)},{\textbf{g}}_1^{(-r)}}),$ $ (\normalfont{{\textbf{f}}_1^{(r)},{\textbf{f}}_0^{(-r)},{\textbf{g}}_1^{(r)},{\textbf{g}}_0^{(-r)}})$ lie in the lattice $L_{\normalfont{{\textbf{h}}}}$.
\end{proof}
\begin{theorem}[Pull-back]\label{pullback}
  Let vectors $\normalfont({\textbf{f}}_0^{\prime},{\textbf{g}}_0^{\prime})$ and $\normalfont({\textbf{f}}_1^{\prime},{\textbf{g}}_1^{\prime})$ lie in the lattices $\normalfont L_{{\textbf{h}}_0+{\textbf{h}}_1}$ and $\normalfont L_{{\textbf{h}}_0-{\textbf{h}}_1}$, respectively. Then, the vector $\normalfont({\textbf{f}}_0^{\prime}+{\textbf{f}}_1^{\prime},{\textbf{f}}_0^{\prime}-{\textbf{f}}_1^{\prime},{\textbf{g}}_0^{\prime}+{\textbf{g}}_1^{\prime},{\textbf{g}}_0^{\prime}-{\textbf{g}}_1^{\prime})$ lies in the lattice $L_{\normalfont{\textbf{h}}}$.
\end{theorem}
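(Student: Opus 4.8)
The plan is to check directly the single congruence that defines membership in $L_{\mathbf{h}}$, after unpacking what the two hypotheses mean. Recall that $L_{\mathbf{h}_0+\mathbf{h}_1}$ and $L_{\mathbf{h}_0-\mathbf{h}_1}$ are the $2N$-dimensional lattices spanned by the rows of $\begin{pmatrix} I_N & H_0+H_1 \\ \00_N & qI_N\end{pmatrix}$ and $\begin{pmatrix} I_N & H_0-H_1 \\ \00_N & qI_N\end{pmatrix}$ respectively. Hence $(\mathbf{f}_0',\mathbf{g}_0')\in L_{\mathbf{h}_0+\mathbf{h}_1}$ and $(\mathbf{f}_1',\mathbf{g}_1')\in L_{\mathbf{h}_0-\mathbf{h}_1}$ say exactly that there are integer vectors $\mathbf{u}_0',\mathbf{u}_1'$ of length $N$ with
\begin{align*}
\mathbf{f}_0'(H_0+H_1)+q\mathbf{u}_0' &= \mathbf{g}_0',\\
\mathbf{f}_1'(H_0-H_1)+q\mathbf{u}_1' &= \mathbf{g}_1'.
\end{align*}
These two relations are all the input I would use.

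Next I would set $\mathbf{a}=(\mathbf{f}_0'+\mathbf{f}_1',\,\mathbf{f}_0'-\mathbf{f}_1')$ and $\mathbf{b}=(\mathbf{g}_0'+\mathbf{g}_1',\,\mathbf{g}_0'-\mathbf{g}_1')$ and compute the product $\mathbf{a}H$ block-wise, with $H=\begin{pmatrix}H_0 & H_1\\ H_1 & H_0\end{pmatrix}$. Expanding $(\mathbf{f}_0'+\mathbf{f}_1')H_0+(\mathbf{f}_0'-\mathbf{f}_1')H_1$ and $(\mathbf{f}_0'+\mathbf{f}_1')H_1+(\mathbf{f}_0'-\mathbf{f}_1')H_0$, the cross terms regroup precisely into $\mathbf{f}_0'(H_0+H_1)\pm\mathbf{f}_1'(H_0-H_1)$, so the two displayed relations substitute in immediately and give $\mathbf{a}H=\mathbf{b}-q\mathbf{w}$ with $\mathbf{w}=(\mathbf{u}_0'+\mathbf{u}_1',\,\mathbf{u}_0'-\mathbf{u}_1')$. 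Since $\mathbf{u}_0',\mathbf{u}_1'$ are integral, $\mathbf{w}$ is an integer vector, whence $(\mathbf{a},\mathbf{w})M_{\mathbf{h}}=(\mathbf{a},\mathbf{a}H+q\mathbf{w})=(\mathbf{a},\mathbf{b})$; that is, $(\mathbf{f}_0'+\mathbf{f}_1',\mathbf{f}_0'-\mathbf{f}_1',\mathbf{g}_0'+\mathbf{g}_1',\mathbf{g}_0'-\mathbf{g}_1')\in L_{\mathbf{h}}$, which is the claim.

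Conceptually this is nothing but the inverse of the reduction of the preceding theorem: the $\mathbf{f}$-half and the $\mathbf{g}$-half of the output are each $(\mathbf{f}_0',\mathbf{f}_1')$ and $(\mathbf{g}_0',\mathbf{g}_1')$ right-multiplied by $\mathcal{I}$, which undoes the conjugation $\mathcal{I}(\cdot)\mathcal{I}^{-1}$ that block-diagonalized $H$ in \eqref{conjugate}. I would nonetheless argue at the level of the first-row congruences rather than through that conjugation, because the hypotheses only provide vectors in the two smaller lattices and carry no $\Z D_N$-matrix structure on $\mathbf{f}_0',\mathbf{f}_1'$ to conjugate. There is no serious obstacle: the only care needed is the block-multiplication bookkeeping and keeping the convention straight—pairing $H_0+H_1$ with the index-$0$ data and $H_0-H_1$ with the index-$1$ data—after which the statement falls out by collecting terms.
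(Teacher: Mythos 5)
Your proof is correct and follows essentially the same route as the paper's: both unpack the two lattice memberships into the congruences $\mathbf{f}_0'(H_0+H_1)+q\mathbf{u}_0'=\mathbf{g}_0'$ and $\mathbf{f}_1'(H_0-H_1)+q\mathbf{u}_1'=\mathbf{g}_1'$, recombine them by adding and subtracting (you phrase this as expanding the block product of $(\mathbf{f}_0'+\mathbf{f}_1',\mathbf{f}_0'-\mathbf{f}_1')$ with $H$, which is the same algebra read in reverse), and exhibit the witness $(\mathbf{u}_0'+\mathbf{u}_1',\mathbf{u}_0'-\mathbf{u}_1')$ against $M_{\mathbf{h}}$. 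No gap; the bookkeeping and the pairing of $H_0\pm H_1$ with the index-$0$/index-$1$ data match the paper exactly.
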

\begin{proof}
Let the vectors ${\textbf{u}}_0^{\prime}$ and ${\textbf{u}}_1^{\prime}$ be such that 
\begin{align*}
({\textbf{f}}_0^{\prime},{\textbf{u}}_0^{\prime})\begin{pmatrix}
        I_{N} & H_0+H_1\\
        \00_N & qI_N
    \end{pmatrix} &= ({\textbf{f}}_0^{\prime},{\textbf{g}}_0^{\prime})\\
({\textbf{f}}_1^{\prime},{\textbf{u}}_1^{\prime})\begin{pmatrix}
        I_{N} & H_0-H_1\\
        \00_N & qI_N
    \end{pmatrix} &= ({\textbf{f}}_1^{\prime},{\textbf{g}}_1^{\prime}).
\end{align*}
Therefore we get
\begin{align*}
      {\textbf{f}}_0^{\prime}(H_0+H_1)+q{\textbf{u}}_0^{\prime} &= {\textbf{g}}_0^{\prime}\\
{\textbf{f}}_1^{\prime}(H_0-H_1)+q{\textbf{u}}_1^{\prime} &= {\textbf{g}}_1^{\prime}.
 \end{align*}
adding and subtracting these equations gives    
 \begin{align*}({\textbf{f}}_0^{\prime}+{\textbf{f}}_1^{\prime})H_0+({\textbf{f}}_0^{\prime}-{\textbf{f}}_1^{\prime})H_1+q({\textbf{u}}_0^{\prime}+{\textbf{u}}_1^{\prime}) &= {\textbf{g}}_0^{\prime}+{\textbf{g}}_1^{\prime}\\
 ({\textbf{f}}_0^{\prime}+{\textbf{f}}_1^{\prime})H_1+({\textbf{f}}_0^{\prime}-{\textbf{f}}_1^{\prime})H_0+q({\textbf{u}}_0^{\prime}-{\textbf{u}}_1^{\prime}) &={\textbf{g}}_0^{\prime}-{\textbf{g}}_1^{\prime}.
\end{align*}
Finally
\begin{equation*}
({\textbf{f}}_0^{\prime}+{\textbf{f}}_1^{\prime},{\textbf{f}}_0^{\prime}-{\textbf{f}}_1^{\prime},{\textbf{u}}_0^{\prime}+{\textbf{u}}_1^{\prime},{\textbf{u}}_0^{\prime}-{\textbf{u}}_1^{\prime}) M_{\normalfont{\textbf{h}}} = ({\textbf{f}}_0^{\prime}+{\textbf{f}}_1^{\prime},{\textbf{f}}_0^{\prime}-{\textbf{f}}_1^{\prime},{\textbf{g}}_0^{\prime}+{\textbf{g}}_1^{\prime},{\textbf{g}}_0^{\prime}-{\textbf{g}}_1^{\prime})
\end{equation*}
where 
\begin{equation*}
    M_{\normalfont{\textbf{h}}} = \left(
\begin{array}{c | c}
\begin{matrix}
I_N & \00_N\\
\00_N & I_N
\end{matrix} & \begin{matrix}
H_0 & H_1\\
H_1 & H_0
\end{matrix} \\
\hline
\begin{matrix}
0_N & 0_N\\
0_N & 0_N
\end{matrix} & \begin{matrix}
qI_N & 0_N\\
0_N & qI_N
\end{matrix}
\end{array}
\right)
.\end{equation*}
This gives that the vector $({\textbf{f}}_0^{\prime}+{\textbf{f}}_1^{\prime},{\textbf{f}}_0^{\prime}-{\textbf{f}}_1^{\prime},{\textbf{g}}_0^{\prime}+{\textbf{g}}_1^{\prime},{\textbf{g}}_0^{\prime}-{\textbf{g}}_1^{\prime})$ lies in the lattice $L_{{\textbf{h}}}$. We say that the vector $({\textbf{f}}_0^{\prime}+{\textbf{f}}_1^{\prime},{\textbf{f}}_0^{\prime}-{\textbf{f}}_1^{\prime},{\textbf{g}}_0^{\prime}+{\textbf{g}}_1^{\prime},{\textbf{g}}_0^{\prime}-{\textbf{g}}_1^{\prime})$ is the pull-back of the vectors $({\textbf{f}}_0^{\prime},{\textbf{g}}_0^{\prime})$ and $({\textbf{f}}_1^{\prime},{\textbf{g}}_1^{\prime})$ lying in the lattices $L_{{\textbf{h}}_0+{\textbf{h}}_1}$ and $L_{{\textbf{h}}_0-{\textbf{h}}_1}$, respectively.
\end{proof}
\subsection{Recovering a decryption key}
Let ${\textbf{f}} = ({\textbf{f}}_0,{\textbf{f}}_1), ~~{\textbf{g}} = ({\textbf{g}}_0,{\textbf{g}}_1)$ be ternary vectors where ${\textbf{f}}_i,{\textbf{g}}_i$ roughly have $\lfloor\frac{N}{3}\rfloor$ number of $1$s, $\lfloor\frac{N}{3}\rfloor$ number of $-1$s, and rest are $0$s, the same holds true for ${\textbf{f}}_i^{(r)},{\textbf{g}}_i^{(r)}$, where $-N+1\leq r\leq N-1$. Let $({\textbf{f}}, {\textbf{g}})$ be the private key and ${\textbf{h}} = ({\textbf{h}}_0,{\textbf{h}}_1)$ be the public key satisfying ${\textbf{f}}\star {\textbf{h}} = {\textbf{g}} \pmod q$. From Theorem \ref{pullback}, we know that the vector $(\normalfont{\textbf{f}}_0^{(r)}+{\textbf{f}}_1^{(-r)},{\textbf{g}}_0^{(r)}+{\textbf{g}}_1^{(-r)})\in L_{{\textbf{h}}_0+{\textbf{h}}_1}$, and the vectors $(\normalfont{\textbf{f}}_0^{(r)}-{\textbf{f}}_1^{(-r)},{\textbf{g}}_0^{(r)}-{\textbf{g}}_1^{(-r)}),(\normalfont{\textbf{f}}_1^{(r)}-{\textbf{f}}_0^{(-r)},{\textbf{g}}_1^{(r)}-{\textbf{g}}_0^{(-r)})\in L_{{\textbf{h}}_0-{\textbf{h}}_1}$. In the extreme case when all the $1$s and $-1$s of ${\textbf{f}}_0^{(r)}$ match with $1$s and $-1$s of ${\textbf{f}}_1^{(-r)}$, respectively, and the same is true for ${\textbf{g}}_0^{(r)}, {\textbf{g}}_1^{(-r)}$, then $\norm{({\textbf{f}}_0^{(r)}+{\textbf{f}}_1^{(-r)},{\textbf{g}}_0^{(r)}+{\textbf{g}}_1^{(-r)})}\approx \sqrt{2}\sqrt{\frac{8N}{3}}$. Similarly, we have  $\norm{(\normalfont{\textbf{f}}_0^{(r)}-{\textbf{f}}_1^{(-r)},{\textbf{g}}_0^{(r)}-{\textbf{g}}_1^{(-r)})},\norm{(\normalfont{\textbf{f}}_1^{(r)}-{\textbf{f}}_0^{(-r)},{\textbf{g}}_1^{(r)}-{\textbf{g}}_0^{(-r)})}\approx \sqrt{2}\sqrt{\frac{8N}{3}}$.
Gaussian heuristic predicts that the length of shortest vector in the lattices $L_{{\textbf{h}}_0+{\textbf{h}}_1}$ and $L_{{\textbf{h}}_0-{\textbf{h}}_1}$ is
$$\sigma(L_{{\textbf{h}}_0+{\textbf{h}}_1})=\sigma(L_{{\textbf{h}}_0-{\textbf{h}}_1}) =  \sqrt{\frac{qN}{\pi e}}\approx \sqrt{\frac{4}{\pi e}}N\approx 0.68N,~~~~\text{since}~~~q\approx 4N.$$
Also the ratios,\\

\noindent
$\frac{\norm{({\textbf{f}}_0^{(r)}+{\textbf{f}}_1^{(-r)},{\textbf{g}}_0^{(r)}+{\textbf{g}}_1^{(-r)})}}{\sigma(L_{{\textbf{h}}_0+{\textbf{h}}_1})} = \frac{\norm{({\textbf{f}}_0^{(r)}-{\textbf{f}}_1^{(-r)},{\textbf{g}}_0^{(r)}-{\textbf{g}}_1^{(-r)})}}{\sigma(L_{{\textbf{h}}_0-{\textbf{h}}_1})} = \frac{\norm{({\textbf{f}}_1^{(r)}-{\textbf{f}}_0^{(-r)},{\textbf{g}}_1^{(r)}-{\textbf{g}}_0^{(-r)})}}{\sigma(L_{{\textbf{h}}_0-{\textbf{h}}_1})} \approx \frac{3.37}{\sqrt{N}}.$\\ 

Therefore, for large values of $N$, we expect that for $-N+1\leq r\leq N-1$, the vectors $({\textbf{f}}_0^{(r)}+{\textbf{f}}_1^{(-r)},{\textbf{g}}_0^{(r)}+{\textbf{g}}_1^{(-r)})$ are the shortest in the lattice  $L_{{\textbf{h}}_0+{\textbf{h}}_1}$, and the vectors $({\textbf{f}}_0^{(r)}-{\textbf{f}}_1^{(-r)},{\textbf{g}}_0^{(r)}-{\textbf{g}}_1^{(-r)}), ({\textbf{f}}_1^{(r)}-{\textbf{f}}_0^{(-r)},{\textbf{g}}_1^{(r)}-{\textbf{g}}_0^{(-r)})$ are the shortest in the lattice $L_{{\textbf{h}}_0-{\textbf{h}}_1}$. 

Suppose applying the basis reduction algorithms\footnote{To avoid any possible confusion, in this paper the term \textit{lattice reduction} refers to our reduction method given in subsection \ref{our-reduction}, while the \textit{basis reduction/lattice basis reduction} refers to applying a reduction algorithm like LLL or BKZ to reduce the basis of a lattice.} return $({\textbf{f}}_0^{(r)}+{\textbf{f}}_1^{(-r)},{\textbf{g}}_0^{(r)}+{\textbf{g}}_1^{(-r)})\in L_{{\textbf{h}}_0+{\textbf{h}}_1}$ and  $({\textbf{f}}_0^{(s)}-{\textbf{f}}_1^{(-s)},{\textbf{g}}_0^{(s)}-{\textbf{g}}_1^{(-s)})$ or  $({\textbf{f}}_1^{(t)}-{\textbf{f}}_0^{(-t)},{\textbf{g}}_1^{(t)}-{\textbf{g}}_0^{(-t)})\in L_{{\textbf{h}}_0-{\textbf{h}}_1}$, as a solution to the SVP. In case $s=r$ or $t=-r$, we can obtain the vector $({\textbf{f}}_0^{(s)},{\textbf{f}}_1^{(-s)},{\textbf{g}}_0^{(s)},{\textbf{g}}_1^{(-s)})$ or $({\textbf{f}}_0^{(-t)},{\textbf{f}}_1^{(t)},{\textbf{g}}_0^{(-t)},{\textbf{g}}_1^{(t)})$ that lies in the lattice $L_{{\textbf{h}}}$. For small values of $N$, the chances of getting a match $s=r$ or $t=-r$ is low. However, we observed experimentally that the chance of getting the desired match increases with the value of $N$. Hence, we are able to recover the ternary vectors in the lattice $L_{{\textbf{h}}}$ with high probability for large values of $N$, and by Remark \ref{remark1} it will also work as a decryption key.

As discussed, there are chances that the basis reduction algorithms do not return the desired vectors. Even in those cases there is a way to get a pair of vectors ${\textbf{f}}^{\prime},{\textbf{g}}^{\prime}$ with small coefficients such that $({\textbf{f}}^{\prime},{\textbf{g}}^{\prime})\in L_{{\textbf{h}}}$. Again by Remark \ref{remark1}, such a pair can serve as a decryption key. Suppose $({\textbf{f}}_0^{\prime},{\textbf{g}}_0^{\prime})\in L_{{\textbf{h}}_0+{\textbf{h}}_1}$ and $({\textbf{f}}_1^{\prime},{\textbf{g}}_1^{\prime})\in L_{{\textbf{h}}_0-{\textbf{h}}_1}$ are the short vectors returned by the basis reduction algorithms. Assuming that these vectors take values from the set $\{0,\pm 1, \pm 2, \ldots, \pm \ell\}$. Then, from Theorem \ref{pullback} we know that the vector $({\textbf{f}}_0^{\prime}+{\textbf{f}}_1^{\prime},{\textbf{f}}_0^{\prime}-{\textbf{f}}_1^{\prime},{\textbf{g}}_0^{\prime}+{\textbf{g}}_1^{\prime},{\textbf{g}}_0^{\prime}-{\textbf{g}}_1^{\prime})$ lies in the lattice $L_{{\textbf{h}}}$, and takes values from the set $\{0,\pm 1, \pm 2, \ldots, \pm 2\ell\}$. Let ${\textbf{f}}^{\prime} = ({\textbf{f}}_0^{\prime}+{\textbf{f}}_1^{\prime},{\textbf{f}}_0^{\prime}-{\textbf{f}}_1^{\prime})$ and ${\textbf{g}}^{\prime} = ({\textbf{g}}_0^{\prime}+{\textbf{g}}_1^{\prime},{\textbf{g}}_0^{\prime}-{\textbf{g}}_1^{\prime})$, then $({\textbf{f}}^{\prime},{\textbf{g}}^{\prime})$ serves as a potential decryption key for small values of $\ell$, refer to Table \ref{table: pull back}.
 
\begin{remark}\label{remark2}
The same approach can be applied to recover a decryption key in GR-NTRU built over other groups whose matrices show similar pattern as the dihedral group. For example, a cyclic  group $G = C_{2N}$ of order $2N$ has the matrix of the form $\begin{pmatrix}
A & B \\
B & A
\end{pmatrix}$.
\end{remark}
\section{Experimental Results}
\label{results}
We ran our experiment using different parameter sets $(N, p, q, d)$ where $N$ is a prime number equal to half the order of the dihedral group, $p=3$, $d = \lfloor \frac{2N}{3} \rfloor$, and $q$ is the least power of $2$ satisfying $q>(6d+1)p$.
For each parameter set, we generated $100$ random private keys and messages, then we generated the corresponding public keys and the ciphertexts.
We ran algorithms ~\ref{alg:naive_approach}, ~\ref{alg:ternary key} to retrieve a decryption key and decrypt the ciphertext.
Algorithm ~\ref{alg:naive_approach} describes the steps of retrieving the private key by solving the SVP in a $4N$-dimensional lattice, which is the naive way to do a lattice attack on the public key for an NTRU-like scheme, while Algorithm ~\ref{alg:ternary key} shows the steps of retrieving the key by solving the SVP in two lattices of dimension $2N$.

We measured the average time to run the algorithms, the percentage of the returned vectors that worked successfully as decryption keys, and their average norms.
For the naive approach, if Algorithm ~\ref{alg:naive_approach} returns a vector ${\textbf{k}}$ such that ${\norm{\textbf{k}}} \leq 4 \times \norm{\textbf{private key}}$, we count the trail as a success. For the pull-back approach, the algorithm returns two vectors ${\textbf{k}_1}$ (non-ternary) and ${\textbf{k}_2}$ (ternary). In case ~ $~{\norm{\textbf{k}_1}} \leq 4 \times \norm{\textbf{private key}}$, then ${\textbf{k}_1}$ is counted as a decryption key, while ${\textbf{k}_2}$ being ternary is always counted, if the algorithm returns it.
For verification purpose, we have decrypted the ciphertext and checked that the decrypted messages equal the original messages for all the returned keys. 

The success of retrieving the key highly depends on the basis reduction algorithm. The goal of basis reduction techniques is to find shorter and nearly orthogonal bases. LLL ~\citep{LLL} and BKZ ~\citep{BKZ} are famous examples of these algorithms. While LLL can run in a polynomial time and produce a good reduced basis for smaller dimensions, it fails in higher dimensions. BKZ has an additional input, the block size $\beta$, that affects both the running time and the quality of the reduced basis. The larger the value of $\beta$, the better the quality of the reduced basis and the higher the running time. Many enhancements have been introduced to BKZ resulting in BKZ2.0 ~\citep{BKZ2.0} and other variants of BKZ ~\citep{progressive_bkz, deep_bkz}.
For our experiment, FPLLL implementations of LLL and BKZ2.0 ~\citep{fplll} have been used as options for lattice basis reduction. We executed the experiment in SageMath depending on FPyLLL ~\citep{fpylll} as a python wrapper of FPLLL. Tables ~\ref{table:naive approach}, ~\ref{table: pull back} show the results for algorithms ~\ref{alg:naive_approach}, ~\ref{alg:ternary key}, respectively and Fig. ~\ref{fig:main} breaks down these results into comparisons of the success rates and average times for the mentioned algorithms.
Timed results have been executed on a system running Windows 10 Pro with Intel(R) Core(TM)i9-10980HK CPU@2.40GHZ with 32 GB installed RAM. 

We can notice that the naive approach is successful up-to certain values of $N$, then for $N> 67$ we aren't able to get the results in reasonable time since we are solving the SVP in a $4N$-dimensional lattice. However, the pull-back approach can function and retrieve the decryption key for larger values of $N$.\\
 In the pull-back approach, the success rate of getting the non-ternary key ${\textbf{k}_1}$ is higher than that for the ternary key ${\textbf{k}_2}$ since, in the former case, we are looking only for two short-enough vectors in the two lattices $L_{{\textbf{h}}_0+{\textbf{h}}_1}$, $L_{{\textbf{h}}_0-{\textbf{h}}_1}$ such that their pull-back is also a short-enough vector in the larger lattice $L_{{\textbf{h}}}$. However, to retrieve ${\textbf{k}_2}$, we need to find a match for two rotated vectors $\in \{ {0,\pm 1, \pm 2}\}^{2N} $ in the lattices  $L_{{\textbf{h}}_0+{\textbf{h}}_1}$, $L_{{\textbf{h}}_0-{\textbf{h}}_1}$ that enable retrieving a ternary key in $L_{{\textbf{h}}}$. Finding such a match increases with the dimension of the lattice and the quality of vectors returned by the basis reduction algorithm. We can see from Table ~\ref{table: pull back}, Fig. \ref{main:e} that LLL is successful up-to $N=73$ in finding the non-ternary key, then the norm of the returned vector starts to increase significantly, while BKZ2.0 can retrieve vectors with smaller norms for higher dimensions.
On the other hand, the success rate for finding a ternary vector starts to increase when the value of  $N$ increases. Further, for $N = 61, 67, 71$, and $73$ the success rate is $98\%, 73\%, 48\%$, and $33\%$ respectively for LLL as a basis reduction option, while the success percentage is $100\%$ for the same values of $N$ with  BKZ2.0.

We would like to point out that the effect that appears as a sudden drop in the success rate in Fig. ~\ref{main:b} doesn't mean that BKZ2.0 fails to find a decryption key for $N>67$ in the naive approach and for $N>131$ in the pull-back approach, rather it fails to find the key in a reasonable time due to enabling auto-abort flag in our experiment.  \\


\RestyleAlgo{ruled}
\SetKwComment{Comment}{/* }{ */}
\SetKwInput{KwInput}{Input}                
\SetKwInput{KwOutput}{Output} 

 \begin{algorithm}[]
\DontPrintSemicolon

\caption{Naive approach to retrieve a decryption key}\label{alg:naive_approach}
\KwInput{$\textbf{N, p, q}$: parameters of the dihedral group based GR-NTRU  \newline
            ${\textbf{h}}$: the public key \newline 
            \textbf{threshold}: the maximum norm of a vector to check as a key\newline
            \textbf{option} : basis reduction algorithm
}
\KwOutput{${\textbf{k}}\in L_{{\textbf{h}}}$ that serves as a decryption key or a failure }
\vspace{0.3cm}
$M_{{\textbf{h}} } \gets$  \FuncSty{get\_lattice\_basis(h,N,q)} \Comment*[r]{4N $\times$ 4N matrix}
\vspace{0.2cm}
${M_{{\textbf{h}}}^ {Reduced} } \gets$  \FuncSty{reduce\_basis($M_{{\textbf{h}} }$,option)} \;
$i \gets 1$\;
\While{$i \leq 4N$}{
  Let ${\textbf{v}}= (v_1, v_2,  \dots v_{4N})$ be the $i^{th}$ shortest vector of ${M_{{\textbf{h}}}^ {Reduced} }$\;
\If{$\norm{{\normalfont \textbf{v}}} > threshold$} 
{\Return{failure}}
Let ${\textbf{f}}^{\prime} = (v_1, v_2, \dots v_{2N})$, ${\textbf{g}}^{\prime} = (v_{2N+1}, v_{2N+1}, \dots v_{4N})$\;
\If{${\normalfont \textbf{f}}^{\prime}$ is invertible in $\Z_p D_N$ }
{
${\textbf{k}} \gets (\normalfont{\textbf{f}}^{\prime},{\textbf{g}}^{\prime})$\;
\Return{$\normalfont{\textbf{k}}$}}
$i \gets i+1 $\;
}

\end{algorithm}
\begin{algorithm}[tbp]
\DontPrintSemicolon
\caption{Pull-back approach to retrieve decryption key}\label{alg:ternary key}
\KwInput{$\textbf{N, p, q}$: parameters of the dihedral group based GR-NTRU \newline
            ${\textbf{h}} = ({\textbf{h}}_0, {\textbf{h}}_1)$: the public key \newline 
            \textbf{threshold}: the maximum norm of a vector to check as a key \newline
            \textbf{option} : basis reduction algorithm
}
\KwOutput{${\textbf{k}_1}\in L_{{\textbf{h}}}$ that serves as a decryption key, 
${\textbf{k}_2}\in L_{{\textbf{h}}}$ a \textbf{ternary} decryption key, or a failure }

${\textbf{k}_1} \gets [ \hspace{0.1cm} ]$, ${\textbf{k}_2} \gets [ \hspace{0.1cm} ]$\; $ \normalfont{ key_1\_found} \gets \textbf{false} $, $key_2\_found \gets \textbf{false} $\Comment*[r]{initialization}
$M_{{\textbf{h}}_0+{\textbf{h}}_1} \gets$  \FuncSty{get\_first\_basis(h,N,q)} \Comment*[r]{2N $\times$ 2N matrix}
$M_{{\textbf{h}}_0-{\textbf{h}}_1} \gets$  \FuncSty{get\_second\_basis(h,N,q)} \Comment*[r]{2N $\times$ 2N matrix}
$M_{{\textbf{h}}_0+{\textbf{h}}_1}^{Reduced}, M_{{\textbf{h}}_0-{\textbf{h}}_1}^{Reduced} \gets$  \FuncSty{reduce\_basis($M_{{\textbf{h}}_0+{\textbf{h}}_1}$,$M_{{\textbf{h}}_0-{\textbf{h}}_1}$,option)} \;
$i \gets 1$\;
\While{$i \leq 2N$}{
Let ${\textbf{v}}= (v_1, v_2,  \dots v_{2N})$ be the $i^{th}$ shortest vector of $M_{{\textbf{h}}_0+{\textbf{h}}_1}^{Reduced}$\;
\If{$\norm{{\normalfont \textbf{v}}} > threshold$}
{
\If{$key_1\_found $ \textbf{or}  $key_2\_found$}
{
\Return{${\textbf{k}_1}, {\textbf{k}_2}$}
}
\Return{failure}

}
Let ${\textbf{f}}_0^{\prime} = (v_1, v_2, \dots v_{N})$, ${\textbf{g}}_0^{\prime} = (v_{N+1}, v_{N+2}, \dots v_{2N})$\;
$j \gets 1$\;
\While{$j \leq 2N$}{
Let ${\textbf{w}}= (w_1, w_2,  \dots w_{2N})$ be the $j^{th}$ shortest vector of $M_{{\textbf{h}}_0-{\textbf{h}}_1}^{Reduced}$\;
\If{$\norm{{\normalfont \textbf{w}}} > threshold$}
{break}
Let ${\textbf{f}}_1^{\prime} = (w_1, w_2, \dots w_{N})$, ${\textbf{g}}_1^{\prime} = (w_{N+1}, w_{N+2}, \dots w_{2N})$\;

$(\normalfont{\textbf{f}}^{\prime},{\textbf{g}}^{\prime}) \gets$ 
$\left(({\textbf{f}}_0^{\prime}+{\textbf{f}}_1^{\prime},{\textbf{f}}_0^{\prime}-{\textbf{f}}_1^{\prime}),({\textbf{g}}_0^{\prime}+{\textbf{g}}_1^{\prime},{\textbf{g}}_0^{\prime}-{\textbf{g}}_1^{\prime})\right)$

\If{\textbf{not}($key_1\_found$) \textbf{and} ${\normalfont \textbf{f}}^\prime$ is invertible in $\Z_p D_N$ }
        {
        $\normalfont{\textbf{k}}_1 \gets (\normalfont{\textbf{f}}^{\prime},{\textbf{g}}^{\prime}) $\;
        $key_1\_found \gets \textbf{true}$ \;
        
        }
        
\If{\textbf{not}($key_2\_found$) \textbf{and} $(\normalfont{\textbf{f}}^{\prime},{\textbf{g}}^{\prime}) \in \{-2, 0, 2 \}^{4N}$}
{
  $(\normalfont{\textbf{f}}^{\prime\prime},{\textbf{g}}^{\prime\prime}) \gets (\frac{\normalfont{\textbf{f}}^{\prime}}{2},\frac{{\textbf{g}}^{\prime}}{2}) $ \Comment*[r]{coefficient-wise division}
  \If{${\normalfont \textbf{f}}^{\prime \prime}$ is invertible in $\Z_p D_N$ }
        {
        $ \normalfont{\textbf{k}}_2 \gets (\normalfont{\textbf{f}}^{\prime\prime},{\textbf{g}}^{\prime\prime}) $\;
          $key_2\_found \gets \textbf{true}$  \;
        
        }

}
\If{$key_1\_found $ \textbf{and}  $key_2\_found$}
{
\Return{$\normalfont{\textbf{k}_1}, \normalfont{\textbf{k}_2} $}
}
$j \gets j+1 $\;
}
$i \gets i+1 $\;
}

\end{algorithm}

\renewcommand{\thefootnote}{\fnsymbol{footnote}}

\begin{table}[]
\begin{center}
\begin{minipage}{\textwidth}
\setcounter{mpfootnote}{-1}%
\renewcommand{\thempfootnote}{\fnsymbol{mpfootnote}}
\renewcommand\footnoterule{}
\caption{Results for the naive approach to retrieve a decryption key}\label{table:naive approach}
\begin{NoHyper}
 \begin{tabular*}{\textwidth}{@{\extracolsep{\fill}}llcccc@{\extracolsep{\fill}}}
          \toprule%
              \multicolumn{2}{c}\textbf{ {}} &\multicolumn{2}{c}{\textbf{LLL }} &\multicolumn{2}{c}{\textbf{BKZ2.0 }} \\
          \midrule
          N & $\norm{key}$ & ${\textbf{k}} \%$ & \makecell{Time \\ avg (s)}   & ${\textbf{k}} \%$ &   \makecell{Time \\ avg (s)}\\
          \midrule
               13 & 5.7446 &100 & 0.3219&	100 &	0.872\\
               17 & 6.7082 &100 &0.5431 & 100 & 1.118\\
               19 & $ 7 $ &100 & 0.7139 & 100 & 1.337 \\
               23& 7.8103 & 100 & 1.2061 & 100 & 	2.182\\
               29& 8.7750& 100& 2.5117 &100 &	5.057\\
               31& 9 & 100  & 3.1766 & 100 &6.189	\\
               37& 9.8489 & 100 & 6.4258 & 100 & 12.038\\
               41& 10.4403 & 7 & 9.1183 & 100 & 21.484 \\
               43& 10.6301 & 1 & 16.640 & 100 & 33.765\\
               47& 11.1803 & 1 & 37.187 & 100 & 42.984\\
              
               53& 11.8743 & 0 & \_ & 100 &  1823.4\footnotemark[1]\\ 
               61& 12.6886 & 0 & \_ & 100 & 4203.9\footnotemark[1]\\
               67& 13.3041 & 0 & \_ & 100 & 9885.4\footnotemark[1]\\
                \makecell{$ 71$   } & \makecell{ 13.7477 } & \makecell{0   }  & \makecell{\_  } & \makecell{ \_ \footnotemark[7]\\   } &\makecell{ \_  }\\

          \bottomrule
          \label{tab:reproduzierbarkeit}
      \end{tabular*}
  \end{NoHyper}    
\stepcounter{mpfootnote}%
\footnotetext{
The results are obtained by running Algorithm ~\ref{alg:naive_approach} (the naive approach) with threshold value = $4 \times \norm{ key}$, where
$\norm{key}$ refers to the norm of the private key for the corresponding value of $N$, ${\textbf{k}} \%$ refers to the success rate of retrieving a decryption key, which is equivalent to solve the SVP in a $4N$-dimensional lattice, and the average time indicates the average running time of one trail of Algorithm ~\ref{alg:naive_approach} over $100$ randomly generated examples.
LLL has been called using the default parameters with $\delta =0.99, \eta=0.501$ and BKZ2.0 with block size $\beta = 40$ and auto-abort flag enabled.

}

\footnotetext[*]{ We have used multiple-precision binary floating-point computations with correct rounding (MPFR) ~\citep{mpfr} for arbitrary precision at 200-bits   due to floating points errors. 
}
\footnotetext[**]{ auto-abort is triggered when the execution takes longer times and the quality of basis doesn't improve quickly over tours.  }

\end{minipage}
\end{center}
\end{table}

\begin{table}[] 
\begin{center}
\begin{minipage}{\textwidth}
\setcounter{mpfootnote}{-1}%
\renewcommand{\thempfootnote}{\fnsymbol{mpfootnote}}

\renewcommand\footnoterule{}
\caption{Results for the pull back approach to retrieve a decryption key}\label{table: pull back}
\begin{NoHyper}
    
 \begin{tabular*}{\textwidth}{@{\extracolsep{\fill}}llcccccccc@{\extracolsep{\fill}}}
          \toprule%
              \multicolumn{2}{c}\textbf{ {}} &\multicolumn{4}{c}{\textbf{LLL }} &\multicolumn{4}{c}{\textbf{BKZ2.0  }} \\
          \midrule
          N & $\norm{key}$  & ${\textbf{k}_1} \%$ &  \makecell{ $ {\norm{\textbf{k}_1}}$ \\ avg }& ${\textbf{k}_2} \%$ & \makecell{Time \\ avg (s)}   & ${\textbf{k}_1} \%$ & \makecell{$ {\norm{\textbf{k}_1}}$ \\ avg }& ${\textbf{k}_2} \%$ &\makecell{Time \\ avg (s)} \\
          \midrule
               13 & 5.745 & 100 & 9.180 &  42& 0.427 &100 &9.194	&57& 0.998 \\
               17 & 6.708 & 100 & 10.947 & 84 &0.748 &100&10.993&78&1.439 \\
               19 & 7 & 100 & 11.561 &86& 1.024  &100&11.648&88&1.713 \\
               23& 7.810 & 100 &13.117 &95&1.176&100 &13.212 &96&1.903\\
               29& 8.775 & 100 & 15.003& 99& 1.578& 100&15.008&100&2.380\\
               31& 9 & 100 & 15.387	& 100 &1.844 &100&15.458&100&2.663\\
               37& 9.849 & 100 & 17.057&100&2.296&100&17.084 &100&3.384\\

               41& 10.440 & 100& 18.208& 100 & 3.087 & 100 & 18.224 & 100 & 4.442\\
               
                43& 10.630 & 100 & 18.545 & 100 & 3.963 & 100 & 18.569	& 100 & 5.326\\
                
               47&  11.180 & 100 &19.616 & 100& 5.174 & 100 & 19.603& 100 & 7.815 \\
               53 & 11.874 & 100 & 20.936 & 100 & 7.419 & 100 & 20.912& 100 & 11.04\\
               61& 12.689 & 100 & 22.715 & 98 & 12.749 & 100 & 22.442 & 100 & 21.902\\
                67& 13.304 & 100 & 26.881 & 73 & 26.998 & 100 & 23.750 & 100 & 28.692 \\
                71 & 13.748 & 96 & 33.665 & 48 & 56.955 & 100 & 24.561 & 100 & 60.972\\
                73& 13.892 & 78 & 39.519 & 33 & 58.989& 100 & 24.830& 100 & 77.803\\
                79&  14.457	& 7 & 51.279 & 0 & 37.093\footnotemark[2] & 100 & 25.967 & 100 & 136.291\\
                83 & 14.866 & 0 &  \_ & 0 & \_ & 100 & 26.721& 100 & 160.720 \\
                89& 15.395 & 0 & \_& 0&\_& 100& 27.809 & 100 & 1928.9\footnotemark[1]\\
                97& 16.031& 0& \_& 0 & \_ & 100& 29.189 &100&  2710.7\footnotemark[1]\\
                131 & 18.682 & 0 & \_ & 0 & \_ & 100 & 34.117 & 100 & 16043.8\footnotemark[1]\\
                149 &  19.925 & 0 & \_ & 0 & \_ & \_\footnotemark[7] & \_ & \_\footnotemark[7]  & \_ \\
          \bottomrule
          \label{tab:reproduzierbarkeit}
      \end{tabular*}
      \end{NoHyper}
\stepcounter{mpfootnote}%
\footnotetext{The results are obtained by running Algorithm ~\ref{alg:ternary key} (the pull-back approach) with threshold value = $2 \times \norm{ key}$, where $\norm{key}$ refers to the norm of the private key for the corresponding value of $N$, ${\textbf{k}_1} \%$ refers to the success rate of retrieving a decryption key short-enough with norm ${\norm{\textbf{k}_1}}$, while ${\textbf{k}_2} \%$ refers to the success rate of retrieving a ternary key , which is equivalent to solve two instances of the SVP in a $2N$-dimensional lattice, and the average time indicates the average running time of one trail of Algorithm ~\ref{alg:ternary key} over $100$ randomly generated examples. LLL has been called using the default parameters with $\delta =0.99, \eta=0.501$ and BKZ2.0 with block size $\beta = 40$ and auto-abort flag enabled.}
\footnotetext[*]{We have used multiple-precision binary floating-point computations with correct rounding (MPFR) ~\citep{mpfr} for arbitrary precision at 200-bits   due to floating points errors. 
}
\footnotetext[**]{auto-abort is triggered when the execution takes longer times and the quality of basis doesn't improve quickly over tours. }
\footnotetext[\dagger]{ LLL algorithm produce worse quality of reduced basis, therefore Algorithm ~\ref{alg:ternary key} meets the threshold condition and doesn't do any further processing to find a ternary key ${\textbf{k}_2}$, hence the time is slightly lower for this value of $N$.  }
\end{minipage}
\end{center}
\end{table}


\begin{figure}[!h]

\begin{minipage}{.5\linewidth}
\centering
\subfloat[\centering LLL success\%]{\label{main:a}\includegraphics[scale=.37]{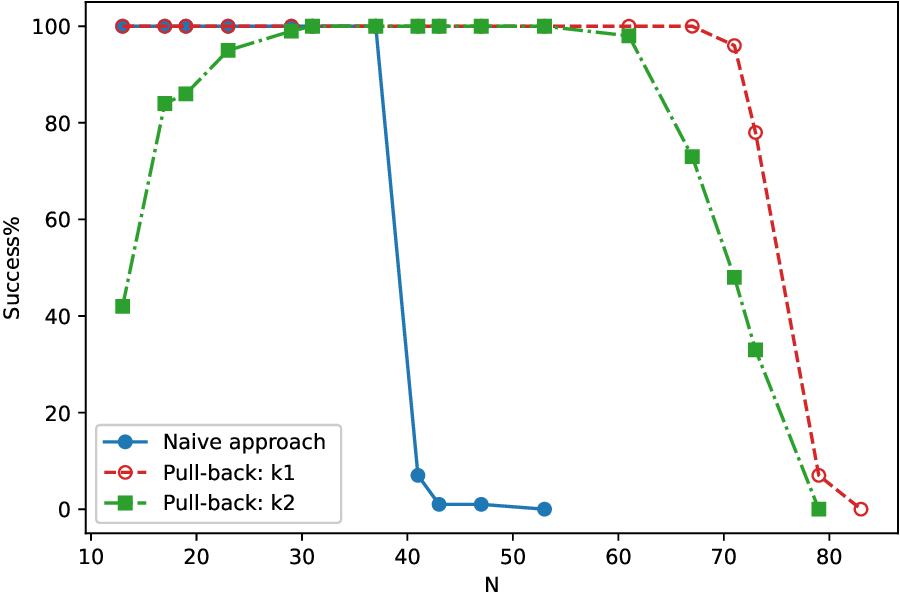}}
\end{minipage}%
\begin{minipage}{.5\linewidth}
\centering
\subfloat[\centering BKZ2.0 success\%]{\label{main:b}\includegraphics[scale=.37]{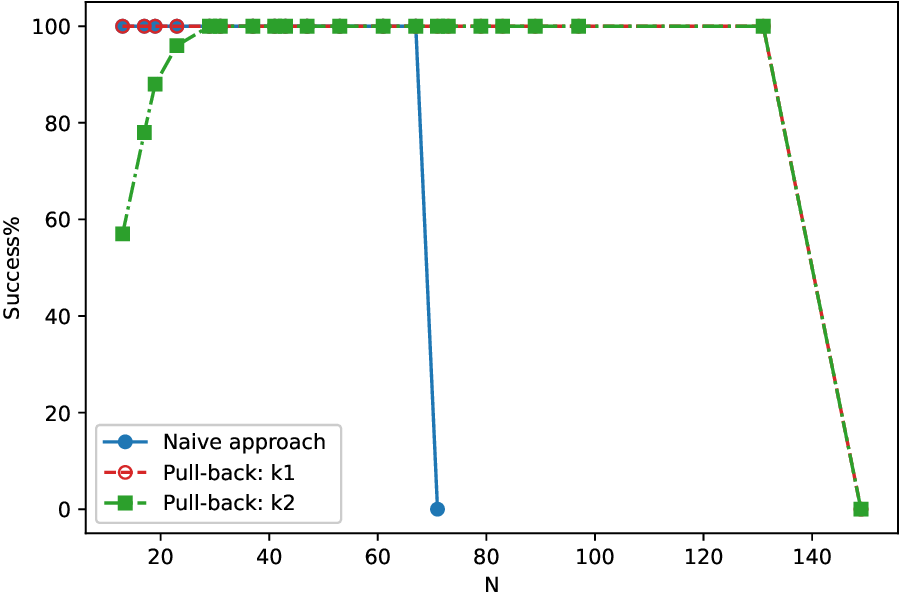}}
\end{minipage}
\par \medskip
\begin{minipage}{.5\linewidth}
\centering
\subfloat[\centering LLL average time (seconds)]{\label{main:c}\includegraphics[scale=.37]{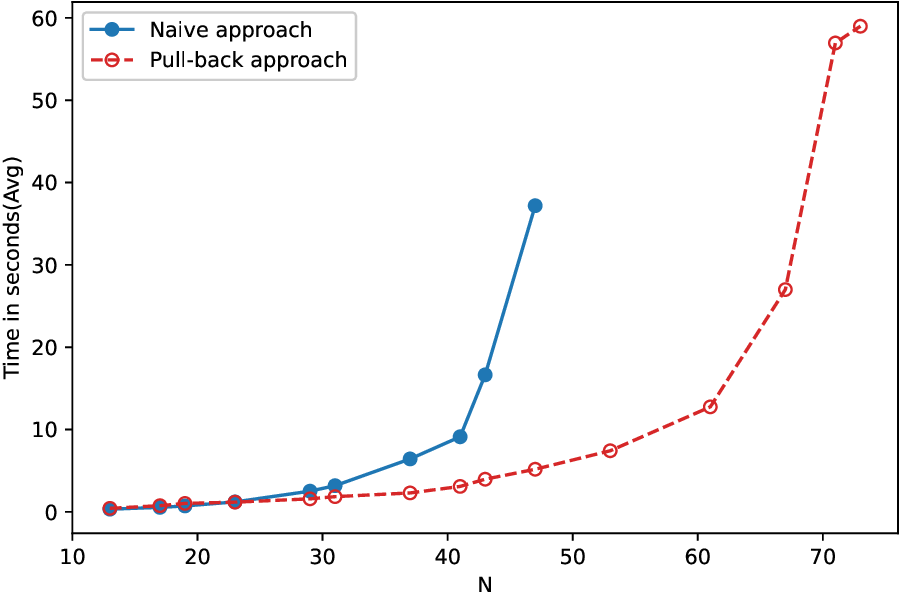}}
\end{minipage}%
\begin{minipage}{.5\linewidth}
\centering
\subfloat[\centering BKZ2.0 average time (seconds)]{\label{main:d}\includegraphics[scale=.37]{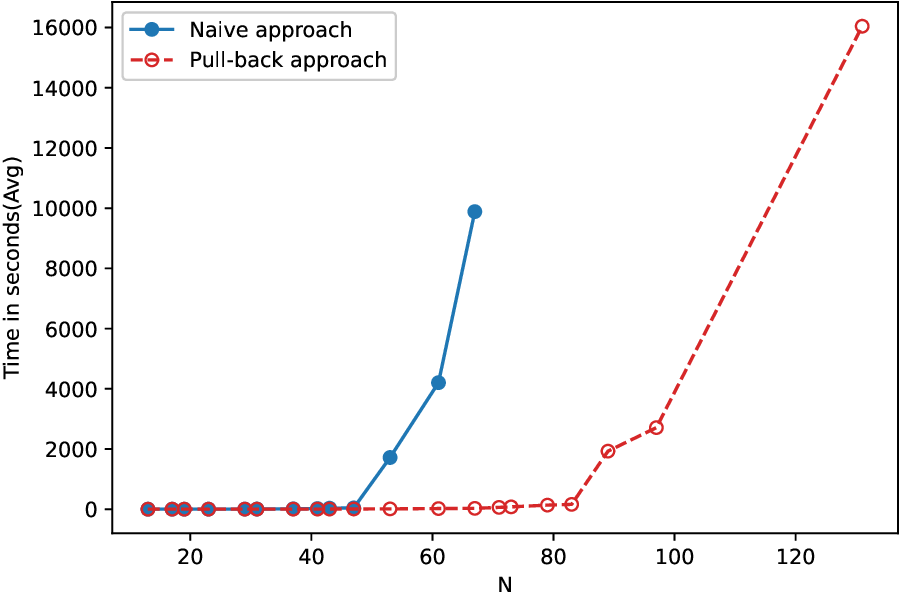}}
\end{minipage}
\par \medskip

\par\medskip
\centering
\subfloat[\centering Pull-back ${\norm{\textbf{k}_1}}$ (LLL vs. BKZ)]{\label{main:e}\includegraphics[scale=.4]{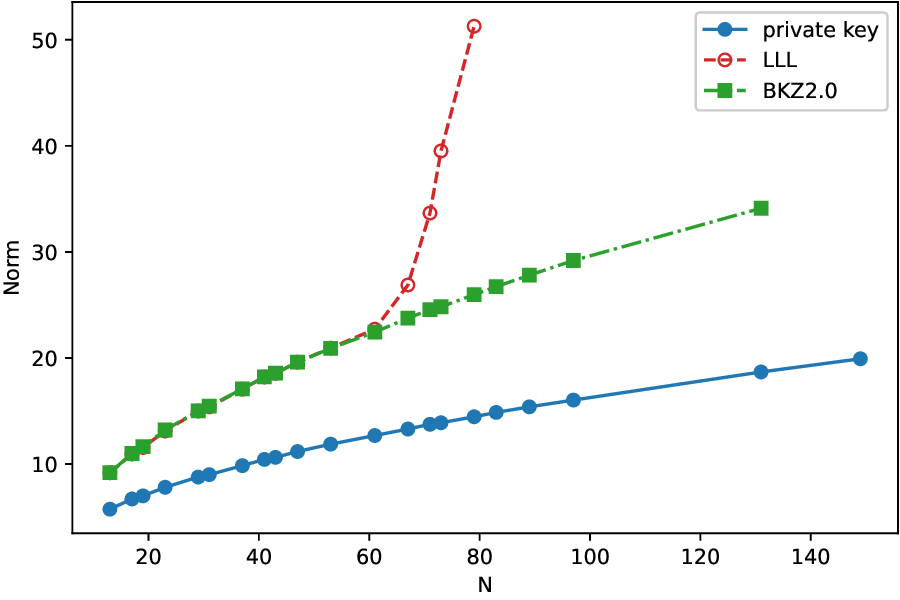}}

\caption{Naive approach vs. Pull-back approach over different values of N. Figures \ref{main:a}, \ref{main:b} show the success percentage of retrieving a decryption key, figures ~\ref{main:c}, \ref{main:d} compare the average time (in seconds) for LLL, BKZ2.0, respectively, and Fig. ~\ref{main:e} compares the norm of the private key with the norm of ${\textbf{k}_1}$ returned by the pull-back approach for LLL and BKZ2.0 over different values of $N$. }
\label{fig:main}
\end{figure}



\clearpage

\section{Conclusion}
\label{conclusion}
This paper provides a lattice reduction for GR-NTRU based on the dihedral group using elementary matrix algebra. We show that for a dihedral group of order $2N$, one can perform the lattice attack on the public key in two $2N$-dimensional lattices instead of a $4N$-dimensional lattice. We provide an approach to retrieve two vectors in the two smaller lattices and pull them back to the larger one. Our pull-back approach gives two potential decryption keys: a short-enough key (not necessarily ternary), while the other is a ternary key. For a good reduced basis, theoretical analysis and experimental results show that retrieving the first key is deterministic, while the ternary key is returned with high probability. 
Thus, the scheme under investigation provides an equivalent level of security to GR-NTRU based on a cyclic group of order $N$.
This study is part of an effort to understand the effect of using nonabelian groups in the context of GR-NTRU. As a future scope, the resistance of other nonabelian groups to lattice attacks should be explored.
\section{Declaration}
\textbf{Competing interest:} The authors have no competing interests to declare that are relevant to the content of this article.

\begin{appendices}
\section{Lattice reduction: toy example}\label{secA1}


For a parameter set as $N=7, p=3, q=128, d= \lfloor \frac{2N}{3} \rfloor = \lfloor \frac{14}{3} \rfloor = 4$.\\
Suppose that the private key $(f,g)$  is sampled as:
\begin{align*}
   f & = x-x^2-x^4+x^5+x^6+y-yx^2+yx^4-yx^6\\
   g & = x-x^2+x^3-x^5-x^6+yx+yx^4-yx^5  
\end{align*}
then the public key $h = f_q\star g  \pmod q$ \\
Therefore, 
\begin{align*}
  h= 115+42x+117x^2+108x^3+73x^4+3x^5+53x^6+29y\\ +108yx+34yx^2+72yx^3+5yx^4+36yx^5+101yx^6   
\end{align*}

In other words, the vectors corresponding to $f, g$ and $ h$ are
 ${\textbf{f}} = ({\textbf{f}}_0,{\textbf{f}}_1)$, ${\textbf{g}} = ({\textbf{g}}_0,{\textbf{g}}_1)$, and ${\textbf{h}} = ({\textbf{h}}_0,{\textbf{h}}_1)$, respectively, where
 \begin{align*}
     {\textbf{f}} &= ({\textbf{f}}_0,{\textbf{f}}_1) =
\left( (0, 1, -1, 0, -1, 1, 1),(1, 0, -1, 0, 1, 0, -1) \right) \\
{\textbf{g}} &= ({\textbf{g}}_0,{\textbf{g}}_1) =
\left( (0, 1, -1, 1, 0, -1, -1),( 0, 1, 0, 0, 1, -1, 0) \right) \\
{\textbf{h}} &= ({\textbf{h}}_0,{\textbf{h}}_1) =
\left( (115, 42, 117, 108, 73, 3, 53),( 29, 108, 34, 72, 5, 36, 101) \right)
 \end{align*}

We can notice that the norm of the private key is : $\norm {({\textbf{f}}, {\textbf{g}})} = \sqrt{17}$.\\
Suppose we want to encrypt the message represented by the vector:
$${\textbf{m}} = (0, 0, -1, -1, -1, 0, 1, -1, -1, 0, 1, -1, 0, 0) $$ then the ciphertext vector  ${\textbf{c}} = p{\textbf{h}}\star {\textbf{r}} + {\textbf{m}} \pmod q \text{ for } {\textbf{r}} = (0, 1, -1, 1, 0, -1,\\ -1, 0, 1, 0, 0, 1, -1, 0)$ will be:

$${\textbf{c}} = (123, 64, 97, 31, 92, 46, 63, 119, 23, 111, 39, 80, 33, 99). $$\\
For an attacker who wants to decrypt the message knowing only the public key ${\textbf{h}}$, he launches the lattice attack on the public key, and for that he has the following options:

\subsection{The naive approach}
The straightforward way to apply the naive attack is given by Algorithm ~\ref{alg:naive_approach}.\\
\textbullet\ Build the matrix $M_{{\textbf{h}} }$ i.e., the basis of the lattice $L_{{\textbf{h}} }$ as the following:
\newcommand\scalemath[2]{\scalebox{#1}{\mbox{\ensuremath{\displaystyle #2}}}}
 \[
\scalemath{0.57}{
M_{{\textbf{h}} }=
   \begin{bmatrix}
  \begin{array}{c|c}
  \begin{array}{c c c c c c c c c c c c c c} 
       1 & 0 & 0 & 0 & 0 & 0 & 0 & 0 & 0 & 0 & 0 & 0 & 0 & 0 \\
0 & 1 & 0 & 0 & 0 & 0 & 0 & 0 & 0 & 0 & 0 & 0 & 0 & 0 \\
0 & 0 & 1 & 0 & 0 & 0 & 0 & 0 & 0 & 0 & 0 & 0 & 0 & 0 \\
0 & 0 & 0 & 1 & 0 & 0 & 0 & 0 & 0 & 0 & 0 & 0 & 0 & 0 \\
0 & 0 & 0 & 0 & 1 & 0 & 0 & 0 & 0 & 0 & 0 & 0 & 0 & 0 \\
0 & 0 & 0 & 0 & 0 & 1 & 0 & 0 & 0 & 0 & 0 & 0 & 0 & 0 \\
0 & 0 & 0 & 0 & 0 & 0 & 1 & 0 & 0 & 0 & 0 & 0 & 0 & 0 \\
0 & 0 & 0 & 0 & 0 & 0 & 0 & 1 & 0 & 0 & 0 & 0 & 0 & 0 \\
0 & 0 & 0 & 0 & 0 & 0 & 0 & 0 & 1 & 0 & 0 & 0 & 0 & 0 \\
0 & 0 & 0 & 0 & 0 & 0 & 0 & 0 & 0 & 1 & 0 & 0 & 0 & 0 \\
0 & 0 & 0 & 0 & 0 & 0 & 0 & 0 & 0 & 0 & 1 & 0 & 0 & 0 \\
0 & 0 & 0 & 0 & 0 & 0 & 0 & 0 & 0 & 0 & 0 & 1 & 0 & 0 \\
0 & 0 & 0 & 0 & 0 & 0 & 0 & 0 & 0 & 0 & 0 & 0 & 1 & 0 \\
0 & 0 & 0 & 0 & 0 & 0 & 0 & 0 & 0 & 0 & 0 & 0 & 0 & 1 \\
  \end{array} 
  &
  \begin{array}{ccccccc :ccccccc} 
115 & 42 & 117 & 108 & 73 & 3 & 53 & 29 & 108 & 34 & 72 & 5 & 36 & 101 \\
53 & 115 & 42 & 117 & 108 & 73 & 3 & 108 & 34 & 72 & 5 & 36 & 101 & 29 \\
3 & 53 & 115 & 42 & 117 & 108 & 73 & 34 & 72 & 5 & 36 & 101 & 29 & 108 \\
73 & 3 & 53 & 115 & 42 & 117 & 108 & 72 & 5 & 36 & 101 & 29 & 108 & 34 \\
108 & 73 & 3 & 53 & 115 & 42 & 117 & 5 & 36 & 101 & 29 & 108 & 34 & 72 \\
117 & 108 & 73 & 3 & 53 & 115 & 42 & 36 & 101 & 29 & 108 & 34 & 72 & 5 \\
42 & 117 & 108 & 73 & 3 & 53 & 115 & 101 & 29 & 108 & 34 & 72 & 5 & 36 \\ \hdashline[2pt/2pt]
29 & 108 & 34 & 72 & 5 & 36 & 101 & 115 & 42 & 117 & 108 & 73 & 3 & 53 \\
108 & 34 & 72 & 5 & 36 & 101 & 29 & 53 & 115 & 42 & 117 & 108 & 73 & 3 \\
34 & 72 & 5 & 36 & 101 & 29 & 108 & 3 & 53 & 115 & 42 & 117 & 108 & 73 \\
72 & 5 & 36 & 101 & 29 & 108 & 34 & 73 & 3 & 53 & 115 & 42 & 117 & 108 \\
5 & 36 & 101 & 29 & 108 & 34 & 72 & 108 & 73 & 3 & 53 & 115 & 42 & 117 \\
36 & 101 & 29 & 108 & 34 & 72 & 5 & 117 & 108 & 73 & 3 & 53 & 115 & 42 \\
101 & 29 & 108 & 34 & 72 & 5 & 36 & 42 & 117 & 108 & 73 & 3 & 53 & 115 \\
  \end{array}\\
 \hline
 \begin{array}{cccccccccccccc} 
      
0 & 0 & 0 & 0 & 0 & 0 & 0 & 0 & 0 & 0 & 0 & 0 & 0 & 0 \\
0 & 0 & 0 & 0 & 0 & 0 & 0 & 0 & 0 & 0 & 0 & 0 & 0 & 0 \\
0 & 0 & 0 & 0 & 0 & 0 & 0 & 0 & 0 & 0 & 0 & 0 & 0 & 0 \\
0 & 0 & 0 & 0 & 0 & 0 & 0 & 0 & 0 & 0 & 0 & 0 & 0 & 0 \\
0 & 0 & 0 & 0 & 0 & 0 & 0 & 0 & 0 & 0 & 0 & 0 & 0 & 0 \\
0 & 0 & 0 & 0 & 0 & 0 & 0 & 0 & 0 & 0 & 0 & 0 & 0 & 0 \\
0 & 0 & 0 & 0 & 0 & 0 & 0 & 0 & 0 & 0 & 0 & 0 & 0 & 0 \\
0 & 0 & 0 & 0 & 0 & 0 & 0 & 0 & 0 & 0 & 0 & 0 & 0 & 0 \\
0 & 0 & 0 & 0 & 0 & 0 & 0 & 0 & 0 & 0 & 0 & 0 & 0 & 0 \\
0 & 0 & 0 & 0 & 0 & 0 & 0 & 0 & 0 & 0 & 0 & 0 & 0 & 0 \\
0 & 0 & 0 & 0 & 0 & 0 & 0 & 0 & 0 & 0 & 0 & 0 & 0 & 0 \\
0 & 0 & 0 & 0 & 0 & 0 & 0 & 0 & 0 & 0 & 0 & 0 & 0 & 0 \\
0 & 0 & 0 & 0 & 0 & 0 & 0 & 0 & 0 & 0 & 0 & 0 & 0 & 0 \\
0 & 0 & 0 & 0 & 0 & 0 & 0 & 0 & 0 & 0 & 0 & 0 & 0 & 0 \\
 \end{array}
&
\begin{array}{cccccccccccccc}
128 & 0 & 0 & 0 & 0 & 0 & 0 & 0 & 0 & 0 & 0 & 0 & 0 & 0 \\
0 & 128 & 0 & 0 & 0 & 0 & 0 & 0 & 0 & 0 & 0 & 0 & 0 & 0 \\
0 & 0 & 128 & 0 & 0 & 0 & 0 & 0 & 0 & 0 & 0 & 0 & 0 & 0 \\
0 & 0 & 0 & 128 & 0 & 0 & 0 & 0 & 0 & 0 & 0 & 0 & 0 & 0 \\
0 & 0 & 0 & 0 & 128 & 0 & 0 & 0 & 0 & 0 & 0 & 0 & 0 & 0 \\
0 & 0 & 0 & 0 & 0 & 128 & 0 & 0 & 0 & 0 & 0 & 0 & 0 & 0 \\
0 & 0 & 0 & 0 & 0 & 0 & 128 & 0 & 0 & 0 & 0 & 0 & 0 & 0 \\
0 & 0 & 0 & 0 & 0 & 0 & 0 & 128 & 0 & 0 & 0 & 0 & 0 & 0 \\
0 & 0 & 0 & 0 & 0 & 0 & 0 & 0 & 128 & 0 & 0 & 0 & 0 & 0 \\
0 & 0 & 0 & 0 & 0 & 0 & 0 & 0 & 0 & 128 & 0 & 0 & 0 & 0 \\
0 & 0 & 0 & 0 & 0 & 0 & 0 & 0 & 0 & 0 & 128 & 0 & 0 & 0 \\
0 & 0 & 0 & 0 & 0 & 0 & 0 & 0 & 0 & 0 & 0 & 128 & 0 & 0 \\
0 & 0 & 0 & 0 & 0 & 0 & 0 & 0 & 0 & 0 & 0 & 0 & 128 & 0 \\
0 & 0 & 0 & 0 & 0 & 0 & 0 & 0 & 0 & 0 & 0 & 0 & 0 & 128 \\
\end{array}
    \end{array}
\end{bmatrix}}
\]
\textbullet\ Reduce the basis of the lattice $L_{{\textbf{h}} }$ by applying some  basis reduction algorithm such as LLL or BKZ. Since we are dealing with small dimension, LLL is a good choice to apply here, and the reduced matrix is

\[
\scalemath{0.425}{
  M_{{\textbf{h}} }^{LLL}=
  \begin{bmatrix}
  \begin{array}{rrrrrrrrrrrrrrrrrrrrrrrrrrrr}
      
 -1 & 1 & 0 & 1 & -1 & -1 & 0 & 1 & -1 & 0 & 1 & 0 & -1 & 0 & -1 & 1 & -1 & 0 & 1 & 1 & 0 & 0 & 0 & -1 & 0 & 0 & -1 & 1 \\
-1 & 0 & 1 & 0 & -1 & 0 & 1 & 0 & -1 & 1 & 0 & 1 & -1 & -1 & 0 & -1 & 0 & 0 & -1 & 1 & 0 & 0 & -1 & 1 & -1 & 0 & 1 & 1 \\
-1 & 1 & 0 & -1 & 0 & 1 & 0 & 1 & -1 & 0 & -1 & 1 & 1 & 0 & 0 & 0 & 1 & 0 & 0 & 1 & -1 & 1 & -1 & 1 & 0 & -1 & -1 & 0 \\
1 & 0 & 1 & -1 & -1 & 0 & -1 & 0 & 1 & -1 & 0 & 1 & 0 & -1 & 1 & -1 & 0 & 1 & 1 & 0 & -1 & 1 & 0 & 0 & -1 & 0 & 0 & -1 \\
0 & 1 & -1 & -1 & 0 & -1 & 1 & -1 & 0 & 1 & -1 & 0 & 1 & 0 & -1 & 0 & 1 & 1 & 0 & -1 & 1 & -1 & 1 & 0 & 0 & -1 & 0 & 0 \\
1 & -1 & -1 & 0 & -1 & 1 & 0 & 0 & -1 & 0 & 1 & -1 & 0 & 1 & 0 & 1 & 1 & 0 & -1 & 1 & -1 & 0 & -1 & 1 & 0 & 0 & -1 & 0 \\
1 & 1 & 0 & 1 & -1 & 0 & -1 & -1 & 0 & 1 & 0 & -1 & 1 & 0 & -1 & -1 & 0 & 1 & -1 & 1 & 0 & 0 & 0 & 1 & -1 & 0 & 0 & 1 \\
0 & 1 & -1 & 0 & -1 & 1 & 1 & 1 & 0 & -1 & 0 & 1 & 0 & -1 & 0 & 1 & -1 & 1 & 0 & -1 & -1 & 0 & 1 & 0 & 0 & 1 & -1 & 0 \\
-1 & 0 & -1 & 1 & 0 & 1 & -1 & 0 & 1 & 0 & -1 & 0 & 1 & -1 & 1 & 0 & -1 & 1 & -1 & 0 & 1 & -1 & 0 & 0 & -1 & 1 & 0 & 0 \\
1 & 1 & 1 & 1 & 1 & 1 & 1 & 1 & 1 & 1 & 1 & 1 & 1 & 1 & 0 & 0 & 0 & 0 & 0 & 0 & 0 & 0 & 0 & 0 & 0 & 0 & 0 & 0 \\
0 & -1 & 0 & 1 & 0 & -1 & 1 & 1 & 0 & 1 & -1 & 0 & -1 & 1 & 1 & 0 & 0 & 1 & -1 & 0 & 0 & -1 & 0 & 1 & -1 & 1 & 0 & -1 \\
1 & 1 & -1 & -1 & 1 & 0 & -1 & -2 & -1 & -1 & 0 & 1 & 1 & 0 & -1 & 0 & -1 & 0 & 1 & 0 & -1 & 2 & 1 & -1 & 0 & 0 & -1 & 1 \\
0 & -1 & 1 & 0 & -1 & 0 & 1 & -1 & 0 & -1 & 1 & 1 & 0 & 1 & -1 & 0 & 0 & 1 & 0 & 0 & 1 & -1 & 1 & 0 & -1 & -1 & 0 & 1 \\
1 & -1 & 0 & 1 & -1 & -1 & 1 & -1 & -1 & 0 & 2 & 1 & 1 & 0 & 0 & -1 & 0 & 1 & 1 & 0 & 1 & 0 & 1 & -1 & -2 & -1 & 1 & 0 \\
4 & 0 & 0 & -4 & -1 & -8 & 11 & 0 & 0 & -4 & -1 & -8 & 11 & 4 & 10 & 0 & -24 & 9 & 7 & -7 & 5 & 0 & -24 & 9 & 7 & -7 & 5 & 10 \\
-6 & 10 & 5 & -1 & 1 & -6 & -3 & -6 & -1 & -8 & 10 & 5 & 0 & 1 & -6 & 3 & 10 & 1 & -23 & 10 & 6 & 11 & 7 & -7 & 3 & 10 & 0 & -25 \\
-17 & -7 & -1 & -2 & -2 & -8 & -10 & 3 & 9 & 11 & 18 & 8 & 2 & 3 & -7 & -5 & 5 & 6 & -18 & -7 & -1 & 18 & 7 & 1 & 7 & 5 & -5 & -6 \\
-9 & -4 & 6 & 3 & 14 & -9 & 3 & -4 & 7 & 2 & 15 & -9 & 2 & -10 & -4 & 15 & -12 & 16 & -2 & -11 & -3 & 14 & -10 & 15 & -2 & -9 & -3 & -4 \\
1 & 4 & 0 & 8 & -10 & -5 & 0 & -12 & -5 & 0 & -1 & 5 & 1 & 9 & 24 & -9 & -8 & 8 & -5 & -10 & -1 & -4 & -9 & 0 & 23 & -8 & -8 & 7 \\
-14 & 9 & -4 & 10 & 6 & -4 & -2 & -14 & 9 & -4 & 10 & 6 & -4 & -2 & 3 & 11 & 2 & 4 & -14 & 10 & -16 & 3 & 11 & 2 & 4 & -14 & 10 & -16 \\
-1 & -2 & -2 & -8 & -10 & -17 & -7 & 2 & 3 & 3 & 9 & 11 & 18 & 8 & 5 & 6 & -18 & -7 & -1 & -7 & -5 & -5 & -6 & 18 & 7 & 1 & 7 & 5 \\
-4 & -11 & 1 & -5 & 3 & 13 & 2 & 1 & 0 & 20 & 2 & -12 & -8 & -5 & 12 & -5 & -20 & 8 & 6 & 2 & -4 & 6 & 15 & 4 & 4 & -24 & -10 & 6 \\
-15 & -2 & 12 & 3 & 3 & -9 & 8 & -13 & -4 & -2 & 10 & -11 & 16 & 2 & 5 & -4 & 1 & 18 & 2 & -11 & -13 & 1 & -21 & -2 & 13 & 12 & -5 & 4 \\
-3 & -3 & 4 & 3 & 4 & 6 & -10 & 2 & -5 & 7 & 1 & -4 & -6 & 6 & -1 & 11 & 1 & 18 & 9 & -33 & -5 & 8 & -18 & 4 & -8 & -9 & 9 & 14 \\
-3 & 11 & -19 & 1 & 3 & 16 & -11 & 4 & -2 & -3 & 2 & 5 & -22 & 15 & -10 & -14 & 16 & -1 & 11 & -2 & 1 & -10 & 5 & 0 & -11 & 6 & 10 & -1 \\
-15 & 11 & 2 & -11 & 20 & -2 & -3 & 2 & -3 & -5 & 21 & -14 & -4 & 3 & 2 & -1 & 9 & 15 & -16 & 1 & -12 & 1 & 12 & -6 & -11 & 2 & 9 & -5 \\
-14 & -4 & 1 & 4 & -2 & -6 & 22 & -11 & 18 & 0 & -2 & -16 & 12 & 2 & 2 & 9 & -5 & 0 & 12 & -6 & -10 & 15 & -16 & 0 & -12 & 2 & 0 & 9 \\
10 & -3 & -3 & -12 & -5 & -10 & -5 & 13 & -1 & 2 & -4 & 4 & 8 & 3 & -2 & -1 & -26 & -12 & -22 & -6 & -6 & -8 & -9 & -9 & -4 & -21 & 0 & -2 \\
    \end{array}

\end{bmatrix}}
\]

\textbullet\ Algorithm ~\ref{alg:naive_approach} returns $ \normalfont{\textbf{k}} = (\normalfont{\textbf{f}}^{\prime}, \normalfont{\textbf{g}}^{\prime})$ (the first row of the matrix  $M_{{\textbf{h}} }^{LLL}$) as a solution to the SVP, where
\begin{align*}
 \normalfont{\textbf{f}}^{\prime} &=  (-1, 1, 0, 1, -1, -1, 0, 1, -1, 0, 1, 0, -1, 0)\\ 
\normalfont{\textbf{g}}^{\prime} &= (-1, 1, -1, 0, 1, 1, 0, 0, 0, -1, 0, 0, -1, 1). 
\end{align*}
We can notice that the returned key $(\normalfont{\textbf{f}}^{\prime}, \normalfont{\textbf{g}}^{\prime})$ is different from the actual key. However, it has the same norm and since $\normalfont{\textbf{f}}^{\prime}$ is invertible in $ \Z_p D_N $, it can be used to decrypt the ciphertext and retrieve the message.\\
As we have noticed, the naive approach retrieved the private key by solving the SVP for a matrix of dimension $28 \times 28$. However, our contribution shows how to retrieve a decryption key by solving two instances of the SVP in matrices of dimensions $14 \times 14$.

\subsection{The pull-back approach}
The pull-back approach tries to retrieve two decryption keys; one of them short-enough and can serve as a decryption key, and the other is a ternary decryption key(returns with high probability for large $N$). 
The  steps of this approach are mentioned in Algorithm ~\ref{alg:ternary key}.

\textbullet\ Build two matrices $M_{{\textbf{h}}_0+{\textbf{h}}_1}$, $M_{{\textbf{h}}_0-{\textbf{h}}_1}$ for the  lattices $L_{{\textbf{h}}_0+{\textbf{h}}_1}$, $L_{{\textbf{h}}_0+{\textbf{h}}_1}$, respectively.
\newcolumntype{C}[1]{>{ \centering\arraybackslash$}m{#1}<{$}}
\newlength{\mycolwd}                                         
\newlength{\mycolwdd}
\settowidth{\mycolwd}{$-99.$}%
\settowidth{\mycolwdd}{$--$}%

 \[
\scalemath{0.7}{
M_{{\textbf{h}}_0+{\textbf{h}}_1}=
   \begin{bmatrix}
  \begin{array}{*{7}{@{}C{\mycolwdd}@{}} | *{7}{@{}C{\mycolwd}@{}}}
  
1 & 0 & 0 & 0 & 0 & 0 & 0 &144 & 150 & 151 & 180 & 78 & 39 & 154\\
0 & 1 & 0 & 0 & 0 & 0 & 0 & 161 & 149 & 114 & 122 & 144 & 174 & 32 \\
0 & 0 & 1 & 0 & 0 & 0 & 0 & 37 & 125 & 120 & 78 & 218 & 137 & 181\\
0 & 0 & 0 & 1 & 0 & 0 & 0 & 145 & 8 & 89 & 216 & 71 & 225 & 142\\
0 & 0 & 0 & 0 & 1 & 0 & 0 & 113 & 109 & 104 & 82 & 223 & 76 & 189  \\
0 & 0 & 0 & 0 & 0 & 1 & 0 & 153 & 209 & 102 & 111 & 87 & 187 & 47\\
0 & 0 & 0 & 0 & 0 & 0 & 1 &143 & 146 & 216 & 107 & 75 & 58 & 151  \\
 \hline

0 & 0 & 0 & 0 & 0 & 0 & 0 & 128 & 0 & 0 & 0 & 0 & 0 & 0\\
0 & 0 & 0 & 0 & 0 & 0 & 0 & 0 & 128 & 0 & 0 & 0 & 0 & 0\\
0 & 0 & 0 & 0 & 0 & 0 & 0 & 0 & 0 & 128 & 0 & 0 & 0 & 0\\
0 & 0 & 0 & 0 & 0 & 0 & 0 & 0 & 0 & 0 & 128 & 0 & 0 & 0\\
0 & 0 & 0 & 0 & 0 & 0 & 0 & 0 & 0 & 0 & 0 & 128 & 0 & 0\\
0 & 0 & 0 & 0 & 0 & 0 & 0 & 0 & 0 & 0 & 0 & 0 & 128 & 0\\
0 & 0 & 0 & 0 & 0 & 0 & 0 & 0 & 0 & 0 & 0 & 0 & 0 & 128\\

 \end{array}

\end{bmatrix}}
\]

\[
\scalemath{0.7}{
M_{{\textbf{h}}_0-{\textbf{h}}_1}=
   \begin{bmatrix}
  \begin{array}{*{7}{@{}C{\mycolwdd}@{}} | *{7}{@{}C{\mycolwd}@{}}} 
1 & 0 & 0 & 0 & 0 & 0 & 0 & 86 & -66 & 83 & 36 & 68 & -33 & -48 \\
0 & 1 & 0 & 0 & 0 & 0 & 0 & -55 & 81 & -30 & 112 & 72 & -28 & -26 \\
0 & 0 & 1 & 0 & 0 & 0 & 0 & -31 & -19 & 110 & 6 & 16 & 79 & -35\\
0 & 0 & 0 & 1 & 0 & 0 & 0 & 1 & -2 & 17 & 14 & 13 & 9 & 74 \\
0 & 0 & 0 & 0 & 1 & 0 & 0 & 103 & 37 & -98 & 24 & 7 & 8 & 45\\
0 & 0 & 0 & 0 & 0 & 1 & 0 & 81 & 7 & 44 & -105 & 19 & 43 & 37\\
0 & 0 & 0 & 0 & 0 & 0 & 1 & -59 & 88 & 0 & 39 & -69 & 48 & 79 \\
 \hline

0 & 0 & 0 & 0 & 0 & 0 & 0 & 128 & 0 & 0 & 0 & 0 & 0 & 0 \\
0 & 0 & 0 & 0 & 0 & 0 & 0 & 0 & 128 & 0 & 0 & 0 & 0 & 0\\
0 & 0 & 0 & 0 & 0 & 0 & 0 & 0 & 0 & 128 & 0 & 0 & 0 & 0\\
0 & 0 & 0 & 0 & 0 & 0 & 0 & 0 & 0 & 0 & 128 & 0 & 0 & 0\\
0 & 0 & 0 & 0 & 0 & 0 & 0 & 0 & 0 & 0 & 0 & 128 & 0 & 0\\
0 & 0 & 0 & 0 & 0 & 0 & 0 & 0 & 0 & 0 & 0 & 0 & 128 & 0\\
0 & 0 & 0 & 0 & 0 & 0 & 0 & 0 & 0 & 0 & 0 & 0 & 0 & 128\\

    \end{array}
\end{bmatrix}}
\]

\textbullet\ Apply LLL algorithm to the lattices  $L_{{\textbf{h}}_0+{\textbf{h}}_1}$, $L_{{\textbf{h}}_0-{\textbf{h}}_1}$ to get

\[
\scalemath{0.7}{
  M_{{\textbf{h}}_0+{\textbf{h}}_1}^{LLL}=
  \begin{bmatrix}
  \begin{array}{rrrrrrrrrrrrrr}
      
 1 & 1 & 1 & 1 & 1 & 1 & 1 & 0 & 0 & 0 & 0 & 0 & 0 & 0 \\
1 & -1 & 1 & 0 & 0 & -2 & 2 & 0 & 0 & 1 & 0 & 0 & 0 & -1 \\
0 & 1 & 1 & 1 & -2 & 1 & -1 & -1 & -1 & 1 & 0 & -1 & 1 & 1 \\
-2 & -1 & 0 & -1 & 2 & 0 & -1 & 1 & -1 & -1 & -1 & 0 & 1 & 1 \\
1 & -2 & -1 & 1 & -2 & 1 & 1 & 0 & 0 & 2 & 0 & -1 & 0 & -1 \\
0 & -2 & 0 & 3 & 0 & 0 & 1 & 0 & 0 & -1 & -1 & 0 & 1 & 1 \\
-1 & 1 & 0 & -2 & 0 & 0 & 1 & -2 & 1 & 1 & 1 & -1 & -1 & 1 \\
-2 & -7 & 10 & 6 & -1 & 0 & -6 & 6 & -6 & 2 & 10 & 1 & -24 & 11 \\
-14 & 9 & -4 & 10 & 6 & -4 & -2 & 3 & 11 & 2 & 4 & -14 & 10 & -16 \\
8 & -8 & 8 & -5 & 6 & 4 & -15 & -12 & 12 & 11 & -21 & 14 & -7 & 3 \\
-1 & -1 & 10 & -8 & -11 & 5 & 3 & 8 & 19 & -19 & 4 & -19 & 14 & -7 \\
-18 & -5 & 9 & 3 & -1 & 10 & 3 & -15 & -5 & 7 & 1 & 27 & 0 & -15 \\
3 & 11 & 1 & -7 & -15 & 0 & 7 & 19 & 9 & 4 & -33 & -1 & -13 & 15 \\
6 & -7 & 7 & -10 & -5 & 8 & 0 & -27 & -13 & -30 & -14 & -15 & -7 & -22 \\
    \end{array}

\end{bmatrix}}
\]

\[
\scalemath{0.73}{
  M_{{\textbf{h}}_0-{\textbf{h}}_1}^{LLL}=
  \begin{bmatrix}
  \begin{array}{rrrrrrrrrrrrrr}
1 & 0 & -1 & -1 & 0 & 1 & -1 & 0 & 2 & 0 & 0 & -1 & 2 & -1 \\
-1 & -1 & 2 & 0 & 0 & 2 & -1 & 0 & 1 & -1 & -1 & -1 & 1 & -1 \\
-1 & 1 & 0 & 0 & -2 & 1 & 2 & 0 & 0 & -1 & 1 & -1 & 0 & -1 \\
2 & -2 & 0 & 0 & 1 & 0 & 0 & 1 & -1 & 0 & 0 & -1 & -2 & 1 \\
-2 & -1 & 1 & -1 & 0 & 1 & 1 & 1 & 1 & 1 & -2 & 1 & -1 & 1 \\
1 & -1 & 2 & -1 & -2 & 0 & 0 & 0 & -1 & 0 & 2 & 1 & 0 & 0 \\
1 & 1 & 1 & -2 & 0 & 0 & 0 & -2 & 0 & 1 & -2 & 2 & 0 & -1 \\
-4 & -2 & -4 & -7 & -9 & -19 & -7 & 6 & 6 & -17 & -8 & 0 & -8 & -3 \\
7 & 15 & 11 & 7 & 8 & 8 & -4 & -1 & 18 & 4 & 12 & 3 & -15 & 3 \\
2 & -14 & -3 & -3 & 10 & -9 & 16 & 5 & 0 & -19 & -3 & 12 & 11 & -4 \\
8 & -1 & 2 & 11 & -18 & -4 & 1 & -9 & 5 & 1 & -15 & -8 & 9 & 19 \\
-2 & 0 & 15 & -6 & 12 & -25 & 6 & 6 & 16 & 2 & 9 & -11 & -6 & -16 \\
0 & 10 & 10 & -16 & 7 & -17 & 4 & 24 & -8 & -10 & -7 & -6 & 11 & 0 \\
3 & -11 & -2 & 19 & -9 & -6 & 5 & 4 & 0 & 20 & -5 & 1 & 4 & -22 \\
    \end{array}

\end{bmatrix}}
\]
\textbullet\ Algorithm ~\ref{alg:ternary key} finds $\normalfont{\textbf{k}_1} = (\normalfont{\textbf{f}}^{\prime}, \normalfont{\textbf{g}}^{\prime})$ where,

\begin{align*}
\normalfont{\textbf{f}}^{\prime} &=  (2, -1, 0, -1, 0, -1, 1, 0, -1, 2, 1, 0, -3, 3)\\ 
\normalfont{\textbf{g}}^{\prime} &= (0, 2, 1, 0, -1, 2, -2, 0, -2, 1, 0, 1, -2, 0).  
\end{align*}
We can notice that the returned key $(\normalfont{\textbf{f}}^{\prime}, \normalfont{\textbf{g}}^{\prime})$ is not a ternary key. However, it has small norm $\norm{(\normalfont{\textbf{f}}^{\prime}, \normalfont{\textbf{g}}^{\prime})} = \sqrt{56} < 2 \times  \norm{(\normalfont{\textbf{f}}, \normalfont{\textbf{g}})}$, so one can compute ${\textbf{a}}^\prime \equiv {\textbf{f}}^\prime\star {\textbf{c}}^\prime \pmod q$ and after centerlifting it gets,  ${\textbf{a}}^\prime_{lifted} = (-18, 4, 0, -19, 12, 0, 5, 8, -6,\\ -12, 23, -16, 0, 11)$ (which equals and not merely congruent modulo $q$ to $p{\textbf{g}}^\prime\star{\textbf{r}} + {\textbf{f}}^\prime\star {\textbf{m}}$ )
and since $\normalfont{\textbf{f}}^{\prime}$ is invertible in $ \Z_p D_N $, the attacker calculates ${\textbf{f}}^\prime_p(x)\star {\textbf{a}}^\prime_{lifted} \pmod p$ and centerlifting  it again modulo $p$, he retrieves the  message correctly.\\

\textbullet\ Pull-back approach tries to retrieve better decryption key (ternary key), so Algorithm ~\ref{alg:ternary key} does further processing and search for a match of two rotated vectors that enables retrieving a ternary key lying in $L_{{\textbf{h}}}$.
Algorithm ~\ref{alg:ternary key} finds 
$({\textbf{f}}_0^{(-1)}+{\textbf{f}}_1^{(1)},{\textbf{g}}_0^{(-1)}+{\textbf{g}}_1^{(1)}) = \left( (1, -1, 1, 0, 0, -2, 2),( 0,   0,  1,   0,  0,  0,  -1)\right)$ in lattice $ L_{{\textbf{h}}_0+{\textbf{h}}_1}$, $({\textbf{f}}_0^{(-1)}-{\textbf{f}}_1^{(1)},{\textbf{g}}_0^{(-1)}-{\textbf{g}}_1^{(1)})$ 
$= \left( (1, 1, 1, -2, 0, 0, 0), (-2,   0,   1,  -2,   2,   0 , -1) \right)$
in lattice $L_{{\textbf{h}}_0-{\textbf{h}}_1}$.

\noindent
Finally, it returns the ternary key $ \normalfont{\textbf{k}}_2 = (\normalfont{\textbf{f}}^{\prime\prime},{\textbf{g}}^{\prime\prime}) = \left(({\textbf{f}}_0^{(-1)},{\textbf{f}}_1^{(1)}),({\textbf{g}}_0^{(-1)},{\textbf{g}}_1^{(1)})\right)\in$  $L_{{\textbf{h}}}$ with,
\noindent
\begin{align*}
    \normalfont{\textbf{f}}^{\prime\prime} & = (1, 0, 1, -1, 0, -1, 1, 0, -1, 0, 1, 0, -1, 1)\\
     \normalfont{\textbf{g}^{\prime\prime}} &= (-1, 0, 1, -1, 1, 0, -1, 1, 0, 0, 1, -1, 0, 0).
\end{align*}
The returned key is not exactly the private key, but it is ternary with the same norm and since $\normalfont{\textbf{f}}^{\prime\prime}$ is invertible in $\Z_p D_N$, it can be used perfectly to decrypt the message.
\end{appendices}
\bibliography{sn-bibliography}

%


\end{document}